\newcommand{\EQ}{\begin{eqnarray}}
\newcommand{\EN}{\end{eqnarray}}
\newcommand{\EQQ}{\begin{eqnarray*}}
\newcommand{\ENN}{\end{eqnarray*}}
\newcommand{\col}{\mbox{col }}
\newcommand{\diag}{\mbox{diag}}
\newtheorem{thm}{\bf \em{Theorem}}
\newtheorem{lem}{\bf \em{Lemma}}
\newtheorem{rem}{\bf \em{Remark}}
\newtheorem{defi}{Definition}
\newtheorem{prop}{\it Property}
\newtheorem{prob}{\bf \em{Problem}}
\newtheorem{proposi}{\bf \em{Proposition}}
\newtheorem{ass}{\bf \em{Assumption}}
\begin{document}
\onecolumn
{\Large

This paper was originally published in

S. Wang and J. Huang, ``Adaptive leader-following consensus for multiple
Euler-Lagrange systems with an uncertain leader,''
\emph{IEEE Transactions on Neural Networks and Learning Systems},  vol. 30, No.
7, July 2019, pp. 2162-2388.

We have spotted some typos in equations (19), (20), (22), (23), (24), (25b), (33c), (42), (43) and (45a)
where, in some places,  the constant $\mu_2$ should be $\mu_1$.

These typos are now corrected and the modified version is posted here.}


%

 \twocolumn
\title{Adaptive Leader-Following Consensus for Multiple Euler-Lagrange Systems with an Uncertain Leader System}

\author{Shimin~Wang
        and~Jie~Huang,~\IEEEmembership{Fellow,~IEEE}\thanks{This work has been supported in part by the Research Grants Council of the
Hong Kong Special Administration Region under grant No. 14202914 and in part by Projects of Major International (Regional) Joint Research Program NSFC (Grant no. 61720106011).
}
\thanks{Shimin Wang and Jie Huang are with 
 the Department of Mechanical and Automation
Engineering, The Chinese University of Hong Kong, Shatin, N.T., Hong
Kong. E-mail: smwang@mae.cuhk.edu.hk, jhuang@mae.cuhk.edu.hk.

Corresponding
author: Jie Huang, jhuang@mae.cuhk.edu.hk.}}

\maketitle

\begin{abstract}
In this paper, we study the leader-following consensus problem of multiple Euler-Lagrange systems subject to an uncertain leader system. We first establish an adaptive distributed observer for a neutrally  stable linear leader system whose system matrix is not known exactly. Under standard assumptions, this adaptive distributed observer can estimate and pass the leader's state to each follower through the communication network of the system without knowing the leader's system matrix exactly. Under the additional assumption that the leader's state is persistently exciting,
this adaptive distributed observer can also asymptotically learn the parameters of the leader's system matrix. On the basis of this adaptive distributed observer, we further synthesize an adaptive distributed control law to solve our problem via the certainty equivalence principle. Our result allows
the leader-following consensus problem of multiple  Euler-Lagrange systems to be solved even if none of the followers knows the system matrix of the leader system exactly.
\end{abstract}


\begin{IEEEkeywords} Learning control,  Networked control systems,  Consensus, Euler-Lagrange multi-agent systems,  Distributed observer for uncertain leader system,\end{IEEEkeywords}


\section{Introduction}
\IEEEPARstart{M}{multi-agent}  control systems arise from many engineering applications such as consensus \cite{jad2003, OlfatiFax07, LiuX1,lzzcrm2016},
attitude synchronization of rigid spacecraft systems \cite{ren2004decentralized, CaiHuang2016}, containment control \cite{[6]},
cooperative synchronization of multiple robots \cite{cai2016leader, chung2009cooperative}, and event-triggered distributed control \cite{Tabuada1, liu2017cooperative, Yi2017}.
A comprehensive coverage of the research on the multi-agent control systems can be found in some  monographs \cite{QuZ2009, RenW2008, ZhangHW2014} and the recent survey paper \cite{ZhuB2017}.

An important class of multi-agent control systems is the so-called Euler-Lagrange systems, which describes
 robot
manipulators \cite{r19, r18}, rigid body systems \cite{r18}, and so on. In recent years, the interest in studying the consensus of multiple Euler-Lagrange systems  is growing.
There are two types of consensus problems, i.e.,
 the leaderless consensus problem and the leader-following consensus problem. The leaderless consensus problem is to design a distributed control law to make the states and/or outputs of the closed-loop system asymptotically synchronize to a same trajectory determined by the initial condition of the overall system while the leader-following consensus problem aims to
 design a distributed control law to drive the states and/or outputs of every follower subsystem to a prescribed trajectory generated by a leader system.
 The two consensus
problems of multiple Euler-lagrange systems have been
extensively studied under various scenarios.
Reference
\cite{Min2011} studied  a leaderless consensus problem of multiple EL systems
under the assumption that the network topology was
strongly connected, balanced and fixed. Reference \cite{ren2009distributed}
further considered the leaderless consensus problem for multiple EL systems over a directed, connected graph
 by a distributed control law.
 The leader-following consensus problem for multiple EL systems was studied in \cite{chung2009cooperative} assuming that all followers have access to the leader's state.
 Later in \cite{RN316}, under the assumption that the communication topology is static, directed and connected, the leader-following consensus problem was studied for multiple uncertain EL systems via adaptive control. More recently, reference \cite{RN138} studied the problem of global output-feedback tracking for a class of multiple Euler-lagrange systems subject to static connected communication network without velocity measurement, and reference \cite{kzystmlh2017} considered the control of uncertain Euler-lagrange systems subject to full-state constraints.

 The leader-following consensus problem for multiple uncertain EL systems subject to
jointly connected switching networks was studied in \cite{cai2014leader}, and \cite{cai2016leader} by  distributed observer approach and adaptive distributed observer approach, respectively.
A distributed observer is a dynamic compensator that is able to estimate the state of a leader system and pass it to all followers through the communication network of a multi-agent system. It was first proposed in \cite{su2012cooperative, su2012cooperativesw} to handle the cooperative output regulation problem for linear multi-agent systems and then adopted in
\cite{cai2014leader} for dealing with the leader-following consensus problem for multiple EL systems. The distributed observer works under the assumption that every follower knows the system matrix of the leader system. To relax this assumption, a so-called adaptive distributed observer was further proposed in
  \cite{cai2016leader, cai2017adaptive}, which not only estimates the state but also the system matrix of the leader system. As a result, it only requires that the children of the leader know the leader's system matrix and is thus more practical than the distributed observer. For example, in  \cite{[4]},  the adaptive distributed observer
  was used  to develop some optimal control protocols for the distributed output synchronization
problem of leader-follower multi-agent systems.

Either the distributed observer or the adaptive distributed observer assumes the dynamics of the leader system is known by at least some followers. In many practical applications,
the dynamics of the leader system may not be known exactly by any follower. For example, a sinusoidal signal can be generated by a two dimensional linear leader system. The parameters of the leader system
are determined by the frequency of the sinusoidal signal. If the frequency of the sinusoidal signal is unknown, then the system matrix of the leader system is also unknown.

To cater for this more practical scenario,  in this paper, we will
first establish an adaptive distributed observer for a neutrally stable linear  leader system whose system matrix is not known exactly. Under standard assumptions, this adaptive distributed observer can estimate and pass the leader's state to each follower through the communication network of the system without knowing the leader's system matrix exactly. Under the additional assumption that the leader's state is persistently exciting,
this adaptive distributed observer can also asymptotically learn the parameters of the leader's system matrix. On the basis of this adaptive distributed observer, we further synthesize an adaptive distributed control law to solve the leader-following consensus problem of multiple  Euler-Lagrange systems subject to an uncertain leader system via the certainty equivalence principle.
Since the leader's signal in this paper is a multi-tone sinusoidal signal, our result in  this paper allows a multi-tone sinusoidal signal to have arbitrary unknown amplitudes, initial phases and frequencies, thus significantly enhancing the results in
\cite{cai2014leader} and \cite{cai2016leader}.

The rest of this paper is organized as follows: In Section \ref{section1}, we formulate our problem and introduce some standard assumptions. In Section \ref{section2}, we establish an
adaptive distributed observer for a neutrally stable linear leader system whose system matrix is uncertain. In Section  \ref{section4}, we apply the adaptive distributed observer for the unknown leader system to synthesize an adaptive distributed control law for solving
the leader-following consensus problem of multiple  Euler-Lagrange systems subject to an uncertain  leader system.
An example will be given in Section \ref{section5}. Finally, we close this paper in Section \ref{section6}.

\textbf{Notation:} $\otimes$ denotes the Kronecker product of matrices. For $X_1,\cdots,X_k\in \mathds{R}^{n\times m}$, let $\mbox{col }(X_1,\cdots,X_k)=[X_{1}^{T}, \cdots,X_{k}^{T}]^T$, $$\mbox{block diag} (X_1,\cdots,X_k)=\left[
                                                      \begin{array}{ccc}
                                                        X_1 &   &   \\
                                                          & \ddots &   \\
                                                          &   & X_k\\
                                                      \end{array}
                                                    \right].
$$ For any $ x \in \mathds{R}^m$, unless described otherwise, $x_i$ denotes the $i^{th}$ component of $x$.  $$\diag(x)=\left[
                                                      \begin{array}{ccc}
                                                        x_1 &   &   \\
                                                          & \ddots &   \\
                                                          &   & x_m\\
                                                      \end{array}
                                                    \right].$$
For any $x\in \mathds{R}^{m}$ with $m$ an even positive integer, let $\phi:\mathds{R}^m\mapsto \mathds{R}^{ \frac{m}{2} \times m}$ be such that
\begin{eqnarray}\label{skewopen}
  \phi(x)&=&\left[
                           \begin{array}{ccccc}
                              -x_{2} &   x_{1}  & \cdots &  0 & 0  \\
                              \vdots &  \vdots & \ddots &  \vdots &  \vdots \\
                             0  &  0 & \cdots  &  -x_{m} & x_{m-1} \\
                           \end{array}
                         \right].
\end{eqnarray}

\section{Problem Formulation and Assumptions}\label{section1}

As in \cite{cai2016leader}, we consider
a group of $N$ systems described by the following Euler-Lagrange equations:
\begin{equation}\label{MARINEVESSEL1}
  M_i\left(q_i\right)\ddot{q}_i+C_i\left(q_i,\dot{q}_i\right)\dot{q}_i+G_i\left(q_i\right)=\tau_i,~ i = 1, \cdots, N
\end{equation}
where, for $i=1,\cdots,N$, $q_i\in \mathds{R}^n$ is the vector of generalized position vector, $M_i\left(q_i\right)\in \mathds{R}^{n\times n}$ is the symmetric positive definite inertia matrix, $C_i\left(q_i,\dot{q}_i\right)\in \mathds{R}^{n\times n}$ is the coriolis and centripetal forces, and $\tau_i\in \mathds{R}^{n}$ is the control torque.
For $i=1,\cdots,N$,  (\ref{MARINEVESSEL1}) has the following three properties \cite{r18}:
\begin{prop}The inertia matrix $M_i\left(q_i\right)$ is symmetric and uniformly positive definite, \end{prop}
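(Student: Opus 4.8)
The plan is to derive Property 1 from the physical origin of the inertia matrix, namely the kinetic energy of the mechanical system, rather than to treat it as an abstract matrix inequality. First I would recall that for the $i$-th Euler-Lagrange system the physical location of each mass element is a smooth function $r_i = r_i(q_i)$ of the generalized coordinates, so that its velocity is $\dot{r}_i = J_i(q_i)\dot{q}_i$ with $J_i(q_i)=\partial r_i/\partial q_i$ the configuration Jacobian. Writing the total kinetic energy as a mass integral then gives
$$T_i(q_i,\dot{q}_i)=\frac{1}{2}\int\|\dot{r}_i\|^2\,dm=\frac{1}{2}\dot{q}_i^T\Big(\int J_i^T(q_i)J_i(q_i)\,dm\Big)\dot{q}_i,$$
so that the inertia matrix is identified as $M_i(q_i)=\int J_i^T(q_i)J_i(q_i)\,dm$.

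With this representation the symmetry and pointwise positivity are almost immediate. Each integrand $J_i^T(q_i)J_i(q_i)$ is a Gram matrix, hence symmetric, and symmetry is preserved under integration, which settles the first claim. For positive definiteness I would note that for any vector $x$,
$$x^T M_i(q_i)x=\int\|J_i(q_i)x\|^2\,dm\ge 0,$$
with equality forcing $J_i(q_i)x=0$ over the support of the mass distribution; since the generalized coordinates parameterize the configuration non-degenerately, $J_i(q_i)$ has full column rank and therefore $x=0$. Hence $M_i(q_i)$ is positive definite for every fixed $q_i$.

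The only genuinely delicate step---and the one I would flag as the main obstacle---is upgrading this pointwise positivity to \emph{uniform} positive definiteness, i.e. exhibiting constants $0<\underline{k}_i\le\overline{k}_i$ such that $\underline{k}_i I\preceq M_i(q_i)\preceq\overline{k}_i I$ for all $q_i$. The upper bound is obtained by observing that, for manipulators with revolute joints, the entries of $M_i(q_i)$ are trigonometric polynomials in the joint variables and are therefore globally bounded, so $\lambda_{\max}(M_i(q_i))$ admits a finite bound $\overline{k}_i$. For the lower bound I would exploit that the $q_i$-dependence enters only through rotation matrices, which range over a compact set; combining continuity of the smallest eigenvalue with the pointwise positivity already established and this compactness yields a uniform positive lower bound $\underline{k}_i$. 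Because this is a classical structural fact about Euler-Lagrange dynamics, I would in the end invoke the standard reference \cite{r18} rather than reproduce the full mechanical computation, noting that for systems with prismatic joints or unbounded workspaces the uniform bounds must instead be imposed as an explicit modeling assumption.
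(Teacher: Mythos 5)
Your derivation is correct in substance, but it differs from the paper in a basic way: the paper does not prove Property 1 at all. The three Properties are stated as standing structural facts about the class of Euler-Lagrange systems (\ref{MARINEVESSEL1}), cited wholesale to the standard reference \cite{r18}, and they function as assumptions delimiting the admissible plants rather than as results to be established. Your kinetic-energy argument --- identifying $M_i(q_i)=\int J_i^T(q_i)J_i(q_i)\,dm$, getting symmetry from the Gram structure and pointwise definiteness from full column rank of the configuration Jacobian --- is the standard mechanical justification, and it buys an actual explanation of where the property comes from. More importantly, you correctly isolate the one point where a proof cannot exist at this level of generality: uniform positive definiteness genuinely fails for some EL systems (prismatic joints, unbounded configuration dependence of the inertia), and holds automatically only when $M_i$ depends on $q_i$ through bounded (e.g.\ trigonometric) terms, as for all-revolute manipulators --- which is exactly the situation in the paper's example (\ref{numerica1}), where $M_i$ depends on $q_i$ only through $\cos q_{i2}$. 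Your closing move, invoking \cite{r18} and otherwise imposing the uniform bounds as a modeling assumption, is precisely what the paper does implicitly. The only caution is presentational: nothing in your argument should be phrased as making Property 1 a theorem for arbitrary EL systems, since for the class (\ref{MARINEVESSEL1}) as written it must remain an assumption on the systems considered, and the paper's subsequent Lyapunov analysis (e.g.\ boundedness of $s_i$ and $\tilde{\Theta}_i$ from (\ref{MARINEVESSEL15})--(\ref{MARINEVESSEL16})) relies on it in exactly that capacity.
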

  \begin{prop}For all $a,b\in \mathds{R}^n$, $$M_i\left(q_i\right)a+C_i\left(q_i,\dot{q}_i\right)b+G_i\left(q_i\right)=Y_i\left(q_i,\dot{q}_i,a,b\right)\Theta_i, $$
  where $Y_i\left(q_i,\dot{q}_i,a,b\right)\in \mathds{R}^{n\times q}$ is a known regression matrix and $\Theta_i\in \mathds{R}^{p}$ is a constant vector consisting of the uncertain parameters of (\ref{MARINEVESSEL1}).\end{prop}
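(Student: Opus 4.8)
The plan is to derive this property from the Lagrangian structure underlying (\ref{MARINEVESSEL1}), exploiting the fact that the physical parameters of a rigid articulated body enter its kinetic and potential energies \emph{linearly}; the property then reduces to collecting those parameters into $\Theta_i$ and the accompanying known configuration functions into $Y_i$.

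First I would fix $i$ and recall that (\ref{MARINEVESSEL1}) is the Euler-Lagrange equation of a Lagrangian $L_i=T_i-V_i$, with kinetic energy $T_i=\tfrac12\dot{q}_i^T M_i(q_i)\dot{q}_i$ and potential energy $V_i(q_i)$. For a rigid multibody system the kinetic energy of each link has the form $\tfrac12 m_\ell\, v_\ell^T v_\ell+\tfrac12\,\omega_\ell^T\mathcal{I}_\ell\,\omega_\ell$, where the link translational and angular velocities $v_\ell,\omega_\ell$ depend on $\dot{q}_i$ through configuration-dependent Jacobians that are known functions of $q_i$. Summing over links and matching the quadratic form $\tfrac12\dot{q}_i^T M_i(q_i)\dot{q}_i$ shows that every entry of $M_i(q_i)$ is a \emph{known} function of $q_i$ multiplied by one of the inertial parameters (link masses, first moments, and entries of the inertia tensors). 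I would collect all these inertial parameters, together with the gravitational parameters appearing in $V_i$, into the single constant vector $\Theta_i$.

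Next I would propagate this linearity to the Coriolis and gravity terms. Using the standard Christoffel construction, $C_{i,kl}(q_i,\dot{q}_i)=\tfrac12\sum_{r}\big(\partial_{q_{i,r}}M_{i,kl}+\partial_{q_{i,l}}M_{i,kr}-\partial_{q_{i,k}}M_{i,rl}\big)\dot{q}_{i,r}$, each entry of $C_i$ is a velocity-weighted combination of first partial derivatives of entries of $M_i$; since those entries are linear in $\Theta_i$ with known $q_i$-coefficients, so are their derivatives, and hence every entry of $C_i$ is linear in $\Theta_i$ with coefficients that are known functions of $(q_i,\dot{q}_i)$. Similarly $G_i(q_i)=\partial V_i/\partial q_i$ is linear in the gravitational part of $\Theta_i$. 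Consequently, for fixed $a,b$, each of $M_i(q_i)a$, $C_i(q_i,\dot{q}_i)b$ and $G_i(q_i)$ is a product of a matrix of known functions of $(q_i,\dot{q}_i,a,b)$ with the common vector $\Theta_i$; summing the three and factoring out $\Theta_i$ produces the regression form $Y_i(q_i,\dot{q}_i,a,b)\Theta_i$, with $Y_i$ absorbing all the known functions.

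The main obstacle is the bookkeeping required to show that a \emph{single} constant vector $\Theta_i$ works for all three terms at once: one must verify that the inertial parameters in $M_i$ and $C_i$ and the gravitational parameters in $G_i$ can be merged into one list without conflict, and --- more delicately --- that the Christoffel construction never multiplies two uncertain parameters together, which would break linearity. This is resolved by observing that $\partial/\partial q_i$ is a linear operator acting only on the known $q_i$-dependent coefficients and leaving $\Theta_i$ untouched, so no parameter products can arise; $Y_i$ then merely records the resulting known functions of $(q_i,\dot{q}_i,a,b)$.
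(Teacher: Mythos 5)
Your proposal is correct in outline, but note that the paper does not actually prove this statement at all: Properties 1--3 are quoted as known structural facts about Euler-Lagrange systems, with a citation to Slotine and Li's textbook, and are then simply invoked when the adaptive control law (\ref{dstureq11}) is analyzed. So the comparison here is not between two proofs but between citing a classical result and deriving it. What you wrote is essentially the standard justification behind that citation: the inertial and gravitational parameters of each link enter the kinetic and potential energies linearly; the Christoffel construction of $C_i$ only differentiates the known configuration-dependent coefficient functions, never the parameters, so linearity in a single constant vector $\Theta_i$ survives; and $Y_i$ merely collects the known functions of $(q_i,\dot q_i,a,b)$. One point you gloss over, though: writing the per-link kinetic energy as $\tfrac12 m_\ell v_\ell^T v_\ell+\tfrac12\omega_\ell^T\mathcal{I}_\ell\omega_\ell$ with $v_\ell$ the center-of-mass velocity makes the translational Jacobian depend on the \emph{unknown} center-of-mass offset, so that Jacobian is not a ``known function of $q_i$'' as you claim. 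The standard repair --- implicit in your listing of first moments among the parameters --- is to express velocities at the link-frame origin and reparametrize via the parallel-axis theorem, taking mass, first moments $m_\ell c_\ell$, and the inertia tensor about the frame origin as the independent parameters; only then are all coefficient functions known and the linearity genuine. With that repair your argument is the classical linearity-in-parameters proof, and it supplies the derivation that the paper, for its purposes, is entitled to take off the shelf.
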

 \begin{prop}\label{prop1}$\left(\dot{M}_i\left(q_i\right)-2C_i\left(q_i,\dot{q}_i\right)\right)$ is skew symmetric, $\forall q_i,\dot{q}_i\in \mathds{R}^n$.\end{prop}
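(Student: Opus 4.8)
The plan is to prove Property \ref{prop1} by exploiting the fact that $M_i$ and $C_i$ are not independent objects: both descend from the same scalar kinetic energy $T_i=\tfrac{1}{2}\dot q_i^{T}M_i(q_i)\dot q_i$ through the Euler--Lagrange equations, so the entries of $C_i$ are completely determined by the partial derivatives of the entries of $M_i$. I would first recall the standard derivation in which applying $\frac{d}{dt}\frac{\partial T_i}{\partial\dot q_i}-\frac{\partial T_i}{\partial q_i}$ to $T_i$ isolates $M_i(q_i)\ddot q_i$ together with a quadratic-in-velocity term, and that the latter is written as $C_i(q_i,\dot q_i)\dot q_i$ with the Christoffel-symbol representation
\[
(C_i)_{kj}=\frac{1}{2}\sum_{\ell=1}^{n}\Big(\frac{\partial (M_i)_{kj}}{\partial q_{i\ell}}+\frac{\partial (M_i)_{k\ell}}{\partial q_{ij}}-\frac{\partial (M_i)_{\ell j}}{\partial q_{ik}}\Big)\dot q_{i\ell}.
\]
This identification of $C_i$ is the substantive input; everything after it is bookkeeping.

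Then I would compute $\dot M_i$ entrywise, $(\dot M_i)_{kj}=\sum_{\ell}\frac{\partial (M_i)_{kj}}{\partial q_{i\ell}}\dot q_{i\ell}$, and form the $(k,j)$ entry of $\dot M_i-2C_i$. The bare derivative term cancels against the doubled first Christoffel term, leaving
\[
(\dot M_i-2C_i)_{kj}=\sum_{\ell=1}^{n}\Big(\frac{\partial (M_i)_{\ell j}}{\partial q_{ik}}-\frac{\partial (M_i)_{k\ell}}{\partial q_{ij}}\Big)\dot q_{i\ell}.
\]
Swapping the roles of $k$ and $j$ and then invoking the symmetry of $M_i(q_i)$, i.e. $(M_i)_{\ell k}=(M_i)_{k\ell}$ and $(M_i)_{j\ell}=(M_i)_{\ell j}$, turns this expression into its own negative, which is exactly $(\dot M_i-2C_i)_{kj}=-(\dot M_i-2C_i)_{jk}$. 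Since $q_i,\dot q_i$ were arbitrary, this establishes the skew-symmetry for all $q_i,\dot q_i\in\mathds{R}^n$.

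The main obstacle here is conceptual rather than computational: skew-symmetry of the full matrix $\dot M_i-2C_i$ is a \emph{representation-dependent} statement. Any $\tilde C_i=C_i+S_i$ with $S_i\dot q_i\equiv 0$ yields the identical dynamics \eqref{MARINEVESSEL1} but need not preserve matrix skew-symmetry, so the claim is genuinely false unless $C_i$ is the specific Christoffel-symbol matrix above. Hence the only delicate point is to justify that the $C_i$ appearing in \eqref{MARINEVESSEL1} is (or is taken to be) this canonical factorization; once that is fixed, the index manipulation together with the symmetry of $M_i$ closes the argument. As a consistency check, the weaker quadratic-form identity $\dot q_i^{T}(\dot M_i-2C_i)\dot q_i=0$ also follows from the energy/power balance of the system, and is implied by, and consistent with, the matrix skew-symmetry proved above.
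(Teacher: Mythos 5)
Your proof is correct, but there is nothing in the paper to compare it against: Property \ref{prop1} is stated without proof, as one of three standard facts about the dynamics (\ref{MARINEVESSEL1}) cited from \cite{r18}. Your Christoffel-symbol argument---identify $(C_i)_{kj}$ with the Christoffel symbols of the first kind, cancel $\sum_{\ell}\partial (M_i)_{kj}/\partial q_{i\ell}\,\dot q_{i\ell}$ against $(\dot M_i)_{kj}$, and use symmetry of $M_i$ to get entrywise antisymmetry of what remains---is precisely the standard derivation given in that reference, so in effect you have supplied the proof that the paper delegates to its citation.

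Your caveat about representation dependence is not pedantry; it is exactly the point on which the paper's later use of the property hinges. In the proof of the paper's theorem, skew-symmetry is invoked (just before (\ref{MARINEVESSEL16})) to annihilate the cross term $\tfrac{1}{2}s_i^T\bigl(\dot M_i(q_i)-2C_i(q_i,\dot q_i)\bigr)s_i$ in $\dot V_i$, where $s_i=\dot q_i-\dot q_{ri}$ is in general different from $\dot q_i$. The representation-independent power-balance identity $\dot q_i^T(\dot M_i-2C_i)\dot q_i=0$ would therefore not suffice there; what is needed is skew-symmetry of the full matrix, which holds only for the Christoffel-symbol factorization of the Coriolis matrix, and that is exactly what your index computation establishes. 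So your proof is both correct and the right one for the way Property \ref{prop1} is actually used downstream.
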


The leader's signal $q_0\in \mathds{R}^{n}$, which represents the desired generalized position vector,   is assumed to be generated by the following  exosystem:
\begin{subequations}\label{leaderall}\begin{align}
  \dot{v}&=S(\omega)v\label{leader}\\
  q_0&=Cv
\end{align}
\end{subequations}
where $v\in \mathds{R}^{m}$,  $S(\omega)\in \mathds{R}^{m\times m}$ is the system matrix of $(\ref{leader})$, and is allowed to rely on
some unknown parameter vector $\omega \in \mathds{R}^{l}$, and
$C\in \mathds{R}^{n\times m}$ is a known constant matrix.

As in \cite{cai2016leader}, we view the system composed of (\ref{MARINEVESSEL1}) and (\ref{leaderall}) as a multi-agent system of $N+1$ agents with (\ref{leaderall}) as the leader and the $N$ subsystems of (\ref{MARINEVESSEL1}) as $N$ followers. The network topology of the multi-agent system composed of (\ref{MARINEVESSEL1}) and (\ref{leaderall})  is described by a digraph $\bar{\mathcal{G}}=\left(\bar{\mathcal{V}},\bar{\mathcal{E}}\right)$ with
$\bar{\mathcal{V}}=\{0,\cdots,N\}$ and $\bar{\mathcal{E}}\subseteq \bar{\mathcal{V}}\times \bar{\mathcal{V}}$.
Here the node $0$ is associated
with the leader system (\ref{leaderall}) and the node $i$, $i = 1,\dots,N$,
is associated with the $i$th subsystem of system
(\ref{MARINEVESSEL1}), and, for $i=0,1,\dots,N$, $j=1,\dots,N$, $(i,j) \in
\bar{\mathcal{E}} $ if and only if  $\tau_j$ can use the
information of agent $i$ for control. Let
$\bar{\mathcal{N}}_i=\{j,(j,i)\in \bar{\mathcal{E}}\}$
denote the neighbor set of agent $i$.
Let  ${\cal G}=({\cal V},{\cal E})$  denote
the subgraph of $\bar{\mathcal {G}}$ where ${\cal
V}=\{1,\dots,N\}$, and ${\cal E}\subseteq {\cal V} \times
{\cal V}$ is obtained from $\bar{{\cal E}}$ by removing
all edges between the node 0 and the nodes in ${\cal V}$.

In terms of $\bar{\mathcal{G}}$, we can describe our problem as follows.

\begin{prob}[Leader-following Consensus Problem]\label{ldlesp} Given a system consisting of (\ref{MARINEVESSEL1}) and (\ref{leaderall}), and a  digraph $\mathcal{\bar{G}}$, find a  control law of the following form:
\begin{eqnarray}\label{discon}
  \tau_i &=& f_i\left(q_i,\dot{q}_i,\varphi_i,  \varphi_j-\varphi_i,j\in \mathcal{\bar{N}}_i\right) \nonumber\\
  \dot{\varphi}_i &=& g_i\left(\varphi_i,\varphi_j-\varphi_i,j\in \mathcal{\bar{N}}_i\right),~ i=1,\cdots,N.
\end{eqnarray}
where $\varphi_0 = v$, such that,  for $i=1,\cdots,N$, for any initial condition $q_i(0)$, $\dot{q}_i (0)$, $v(0)$ and $\varphi_i(0)$,  $q_i(t)$, $\dot{q}_i(t)$,  $\varphi_i(t)$ exist and are bounded for all $t\geq0$ and satisfy
$$\lim\limits_{t\rightarrow\infty}\left(q_i\left(t\right)-q_0\left(t\right)\right)=0,~~\lim\limits_{t\rightarrow\infty}\left(\dot{q}_i\left(t\right)-\dot{q}_0\left(t\right)\right)=0.$$
\end{prob}
 Since, for each $i=1,\cdots,N$, $j=0,\cdots,N$, the right hand side of (\ref{discon})  depends on $\varphi_{j} (t)$ only if the $jth$ agent is a neighbor of the $ith$ agent at time $t$, the control law (\ref{discon}) satisfies the communications constraints. Such a control law is called a distributed control law. The specific form of the functions $f_i$ and $g_i$, $i=1,\cdots,N$, will be designed later.

\begin{rem} \label{rem1} Problem \ref{ldlesp} was studied in \cite{cai2014leader, cai2016leader} using the distributed observer approach and the adaptive distributed observer approach, respectively.
 In either \cite{cai2014leader}  or \cite{cai2016leader},  the matrix $S$ is assumed to be known exactly.
However, this assumption may not be desirable. For example,  if the matrix $S$ satisfies Assumption \ref{ass1} to be introduced below,  then the leader's signal is a multi-tone sinusoidal signal.
In this case, assuming the matrix $S$ is known exactly amounts to assuming all the frequencies of the leader's signal are known exactly.
As will be seen later, our result in  this paper will allow us to deal with  multi-tone sinusoidal signals with arbitrary unknown amplitudes, initial phases and frequencies.
\end{rem}

We end this section by listing two assumptions for solving Problem \ref{ldlesp}.

\begin{ass}\label{ass0}$\mathcal{\bar{G}}$ contains a spanning tree with the node 0 as the
root, and $\mathcal{G}$ is undirected.
\end{ass}

\begin{ass}\label{ass1} For all $\omega$, all the eigenvalues of $S(\omega)$ are semi-simple with zero real part.
\end{ass}

\begin{rem} Assumption \ref{ass0} is a quite standard assumption for the leader-following consensus problem for  multi-agent systems subject to a static network.
Assumption \ref{ass1} is also assumed in
\cite{cai2014leader} and \cite{cai2016leader}. Under Assumption \ref{ass1}, without loss of generality, we can always assume that
 $S(\omega)$ is skew-symmetric and takes the following form:
 \begin{eqnarray}\label{skew}
  S(\omega)&=& \left[
              \begin{array}{cc}
               0_{ m_0\times m_0} &   \\
                  & \diag(\omega)\otimes a \\
              \end{array}
            \right],
\end{eqnarray}
where $a=\left[
          \begin{array}{cc}
            0& 1 \\
            -1 & 0 \\
          \end{array}
        \right]
$, $\omega=\col\left(\omega_{01},\cdots,\omega_{0l} \right)\in \mathds{R}^{l}$, $m_0+2 l=m$ and $\omega_{0k} >0$, for $k=1,\cdots, l$.
For convenience, in what follows, we assume $m_0=0$. Thus,  $l=\frac{m}{2}$.
\end{rem}

\section{Adaptive Distributed Observer with an Uncertain Leader}\label{section2}

Let us start with this section by recalling the distributed observer for the leader system (\ref{leaderall}) with $\omega$ a known vector proposed in \cite{su2012cooperative} which takes the following form:
\begin{eqnarray}\label{sdisto}
\dot{\eta}_i &=S(\omega)\eta_i + \mu_1\sum\limits_{j \in \mathcal{\bar{N}}_i}(\eta_j-\eta_i)
\end{eqnarray}
where $\eta_i\in \mathds{R}^m$, $\eta_0=v$ and $\mu_1$ is a positive number. By Lemma 2 of \cite{su2012cooperative}, under Assumptions \ref{ass0} and \ref{ass1}, for $i=1,\cdots,N$, $\lim\limits_{t\rightarrow\infty}\left(\eta_i-v\right)=0$. That is why (\ref{sdisto}) is called the distributed observer for the leader system.
However, (\ref{sdisto}) is not fully distributed in the sense that, for every $i=1,\cdots,N$, (\ref{sdisto}) has to rely on the system matrix $S(\omega)$ of (\ref{leaderall}). To overcome this difficulty, reference \cite{cai2016leader} further proposed the following adaptive distributed observer for the leader system (\ref{leaderall}) with $\omega$ a known vector:
\begin{subequations}\label{hfado}
\begin{align}
\dot{\eta}_i &=S(\omega_i)\eta_i + \mu_{1}\sum_{j \in \mathcal{\bar{N}}_i} (\eta_j-\eta_i),\\
\dot{\omega}_i&=\mu_{2}\left(\sum_{j \in \mathcal{\bar{N}}_i} (\omega_j-\omega_i)\right),\label{hfadob}
\end{align}
\end{subequations}
 where,
 for $i=1,\cdots,N$, $\eta_i\in \mathds{R}^m$, $\omega_i\in \mathds{R}^{l}$ is the estimation of $\omega$, $\omega_0=\omega$, $\eta_0=v$, $\mu_{1}$ and $\mu_{2}$ are some positive numbers. By Lemma 2 of \cite{cai2016leader}, under Assumptions \ref{ass0} and \ref{ass1}, we have both $\lim\limits_{t\rightarrow\infty}\left(\omega-\omega_i\right)=0$ and$\lim\limits_{t\rightarrow\infty}\left(\eta_i-v\right)=0$, $i=1,\cdots,N$. That is why we call (\ref{hfado})  an adaptive distributed observer for the leader system.
 The adaptive distributed observer  (\ref{hfado}) has the property that, for each $i=1,\cdots,N$, (\ref{hfado}) only relies on the information of itself and its neighbors, and is thus a significant enhancement over the distributed observer (\ref{sdisto}).

Nevertheless, for those subsystems which are the children of the leader, the adaptive distributed observer  (\ref{hfado}) still needs to know the leader's frequency $\omega$.
Thus, it is still incapable of dealing with the case where the system matrix
$S (w)$ of the leader system (\ref{leaderall}) depends on an unknown vector $\omega$. As pointed out in Remark \ref{rem1}, in many applications, the system matrix
$S (w)$ is not known exactly. To handle this case, we further define the following distributed dynamic compensator:
\begin{subequations}\label{compensator2}
\begin{align}
\dot{\eta}_i &=S\left(\omega_i\right)\eta_i + \mu_1\sum_{j \in \mathcal{\bar{N}}_i} (\eta_j-\eta_i) \label{compensator2b}\\
\dot{\omega}_{i} &=\mu_2\phi \left ( \sum_{j \in \mathcal{\bar{N}}_i}(\eta_j-\eta_i) \right )\eta_i,~~ i=1,\cdots,N \label{compensator2c}
\end{align}
\end{subequations}
where, $\eta_0=v$, for $i=1,\cdots,N$, $\eta_i\in \mathds{R}^m$,   $\omega_i\in \mathds{R}^{l}$,  $S\left(\omega_i\right)=\left(\diag(\omega_i)\otimes a\right)$, and $\mu_1$ and $\mu_2$ are some positive numbers.

\begin{rem}
What makes  (\ref{compensator2})  interesting is  that it is independent of the unknown vector $\omega$.
If,  for any initial condition,  the solution of (\ref{compensator2}) is bounded over $[0, \infty)$, and
 $\lim\limits_{t\rightarrow\infty}\left(\eta_i-v\right)=0$, $i=1,\cdots,N$, then we call (\ref{compensator2}) an adaptive distributed observer for the uncertain leader (\ref{leaderall}).
\end{rem}


Let $e_{vi} = \sum_{j \in \mathcal{\bar{N}}_i}(\eta_j-\eta_i)$,
$\tilde{\eta}_i=\left(\eta_i-v\right)$ and $\tilde{\omega}_i=\left(\omega_i-\omega\right)$. Then, equation (\ref{compensator2}) can be rewritten as follows:
\begin{subequations}\label{reeq12}
\begin{align}
  \dot{\tilde{\eta}}_i &= S\left(\omega\right) \tilde{\eta}_i + \mu_1 e_{vi}+S\left(\tilde{\omega}_i\right)\eta_i\label{reeq12a}\\
\dot{\tilde{\omega}}_{i} &=\mu_2 \phi(e_{vi})\eta_i,~~i=1,\cdots,N.\label{reeq12b}
\end{align}
\end{subequations}
To put (\ref{reeq12}) in a compact form, let
\begin{eqnarray*}
     \eta &=&\col\left(\eta_1,\cdots,\eta_N\right),~\tilde{\eta}=\col\left(\tilde{\eta}_1,\cdots,\tilde{\eta}_N\right),  \\
     \tilde{\omega}&=&\col\left(\tilde{\omega}_1,\cdots,\tilde{\omega}_N\right),~~ e_{v}=\col(e_{v1},\cdots,e_{vN}),\\
    S_d (\tilde{\omega}) &=&\mbox{block diag }\left(S\left(\tilde{\omega}_1\right),\cdots,S\left(\tilde{\omega}_N\right)\right).
\end{eqnarray*}
Then, we have the following relation:
\begin{equation} \label{eqev}
e_v =-(H\otimes I_m)\tilde{\eta},
\end{equation}
where $H$ consists of the last $N$ rows and the last $N$ columns of the Laplacian matrix  $\mathcal{\bar{L}}$  of the graph $\mathcal{\bar{G}}$ \cite{Hu1}.
It can be verified that
equations (\ref{reeq12}) can be put in the following form:
\begin{subequations}\label{reeq2}
\begin{align}
  \dot{\tilde{\eta}} &=\left(I_N\otimes S\left(\omega\right)-\mu_1 H\otimes I_m\right)\tilde{\eta} + S_d (\tilde{\omega})\eta,\label{reeq2b}\\
\dot{\tilde{\omega}}&=\mu_2 \phi_d(e_{v})\eta,\label{reeq2abbbb}
\end{align}
\end{subequations}
where $\phi_d\left(e_v\right)=\mbox{block diag }\left(\phi(e_{v1}),\cdots,\phi(e_{vN})\right)$.

Before stating our main lemmas, we need to establish the following proposition.

\begin{proposi}\label{Proposition2} For any $z\in \mathds{R}^{l}$ and $x,y\in \mathds{R}^{m}$ with $m = 2 l$,
$$x^TS\left(z\right)y = z^T\phi(x)y,$$
where $S\left(z\right)=\left(\diag(z)\otimes a\right)$ and the matrix function $\phi(\cdot)$ is defined in (\ref{skewopen}).
\end{proposi}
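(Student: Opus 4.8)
The plan is to prove the identity by a direct component-wise computation, exploiting the block structure that both $S(z)$ and $\phi(x)$ inherit from the $2\times 2$ block $a$. Since $S(z)=\diag(z)\otimes a$, the matrix $S(z)$ is block diagonal with $l$ blocks, the $k$-th being $z_k a$ acting on the coordinate pair $(2k-1,2k)$. The right-hand side is governed by the same pairing: by definition (\ref{skewopen}), the $k$-th row of $\phi(x)$ has exactly two nonzero entries, $-x_{2k}$ in column $2k-1$ and $x_{2k-1}$ in column $2k$. So both sides should collapse to a single sum of $l$ scalar bilinear forms indexed by $k$, and it suffices to show these two sums agree term by term.

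First I would expand the left-hand side over the diagonal blocks of $S(z)$. Grouping the coordinates of $x,y\in\mathbb{R}^{m}$ into the pairs $(x_{2k-1},x_{2k})$ and $(y_{2k-1},y_{2k})$, the block-diagonal form gives
$$x^{T}S(z)y=\sum_{k=1}^{l} z_k \begin{bmatrix} x_{2k-1} & x_{2k}\end{bmatrix} a \begin{bmatrix} y_{2k-1} \\ y_{2k}\end{bmatrix}=\sum_{k=1}^{l} z_k\left(x_{2k-1}y_{2k}-x_{2k}y_{2k-1}\right).$$

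Second I would compute the right-hand side directly from (\ref{skewopen}). From the row description above, the $k$-th component of $\phi(x)y$ equals $-x_{2k}y_{2k-1}+x_{2k-1}y_{2k}$, so multiplying on the left by $z^{T}$ yields
$$z^{T}\phi(x)y=\sum_{k=1}^{l} z_k\left(x_{2k-1}y_{2k}-x_{2k}y_{2k-1}\right),$$
which is exactly the sum obtained for $x^{T}S(z)y$. Comparing the two expressions term by term then completes the argument.

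Since both quantities reduce to the same sum, there is no genuine obstacle here; the computation is routine. The only point requiring care is the index bookkeeping, namely correctly matching the $k$-th diagonal block of $S(z)$ and the $k$-th row of $\phi(x)$ to the coordinate pair $(2k-1,2k)$ and tracking the sign induced by $a$. To guard against an off-by-one or sign slip, I would verify the boundary rows $k=1$ and $k=l$ of $\phi(x)$ against (\ref{skewopen}) explicitly and confirm the offsets and signs are consistent across all blocks.
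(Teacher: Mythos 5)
Your proof is correct and takes essentially the same approach as the paper's: both sides are reduced by direct computation to the common sum $\sum_{k=1}^{l} z_k\left(x_{2k-1}y_{2k}-x_{2k}y_{2k-1}\right)$. The only cosmetic difference is that the paper evaluates the left-hand side via the factorization $S(z)=\left(\diag(z)\otimes I_2\right)\left(I_{l}\otimes a\right)$ rather than summing directly over the $2\times 2$ diagonal blocks as you do.
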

\begin{proof} Denote $z= \col\left(z_{1},\cdots,z_{l}\right)$, $x= \col\left(x_{1},\cdots,x_{m}\right)$ and $y= \col\left(y_1,\cdots,y_m\right)$. From  (\ref{skewopen}), we have

\begin{subequations}\begin{align}
              \phi(x)y &=\left[
                           \begin{array}{ccccc}
                              -x_{2} &   x_{1}  & \cdots &  0 & 0  \\
                              \vdots &  \vdots & \ddots &  \vdots &  \vdots \\
                             0  &  0 & \cdots  &  -x_{m} & x_{m-1} \\
                           \end{array}
                         \right]\left[
                                  \begin{array}{c}
                                    y_1 \\
                                    \vdots \\
                                    y_m \\
                                  \end{array}
                                \right]\nonumber\\
                       &= \left[
                                  \begin{array}{c}
                                   x_{1}y_2 -x_{2} y_1\\
                                    \vdots \\
                                    x_{m-1} y_m -x_{m}y_{m-1}
                                  \end{array}
                                \right],\nonumber\\
                 z^T\phi(x)y  &=\sum_{k=1}^{l}z_{k}\left (x_{2k-1} y_{2k} -x_{2k}y_{2k-1}\right) .\nonumber
\end{align}
\end{subequations}
Since
$$S\left(z\right)= \diag(z)\otimes a= \left(\diag(z)\otimes I_2\right)\left(I_{m_1}\otimes a\right),$$
we have
\begin{subequations}\begin{align}
\left(I_{l}\otimes a\right)y &= \col\left(y_{2},~-y_{1},~\cdots,~y_{m},~-y_{m-1}\right),\nonumber\\
 \left(diag(z)\otimes I_2\right)x&= \col\left(z_{1}x_{1},z_{1}x_{2},\cdots,z_{l}x_{m-1},z_{l}x_{m}\right).\nonumber
\end{align}
\end{subequations}
Hence, we have
 \begin{subequations}\begin{align}
x^TS\left(z\right)y&=x^T \left(\diag(z)\otimes I_2\right)\left(I_{l}\otimes a\right)y\nonumber\\
&=\left(\left(\diag(z)\otimes I_2\right)x\right)^T\left(\left(I_{l}\otimes a\right)y\right)\nonumber\\
&=\sum\limits_{k=1}^{l}z_{k}\left(\left(x_{2k-1}y_{2k}-x_{2k}y_{2k-1}\right)\right)\nonumber\\
 & = z^T\phi(x)y.\nonumber
 \end{align}
\end{subequations}
\end{proof}
We now ready to establish our main technical lemmas.

\begin{lem}\label{lemma4} Consider systems (\ref{leaderall}) and (\ref{reeq2}). Under Assumptions \ref{ass0} and \ref{ass1}, for any $v (0)$, any   $\eta(0)$ and $\omega(0)$, any $\mu_1>0$ and $\mu_2>0$, $\eta(t)$ and $\omega(t) $ exist and are bounded for all $t\geq 0$ and satisfy
\EQ
&&\lim\limits_{t\rightarrow\infty} \tilde{\eta} (t) =0,  \label{eq101a}\\
&&\lim\limits_{t\rightarrow\infty}\dot{\tilde{\omega}}(t)=0,  \label{eq101b}\\
&&\lim\limits_{t\rightarrow\infty}S_d \left(\tilde{\omega}\right)\eta =0.  \label{eq101c}
 \EN
\end{lem}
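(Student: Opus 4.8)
The plan is to construct a Lyapunov function that exploits two structural facts and then read off everything from its derivative together with a Barbalat-type argument. First I would record the two ingredients. By Assumption \ref{ass1} the matrix $S(\omega)$ in (\ref{skew}) is skew-symmetric, so $\tfrac{d}{dt}\|v\|^2 = 2v^TS(\omega)v = 0$ and the leader state $v(t)$ is bounded; moreover $S_d(\tilde{\omega})$ is skew-symmetric. By Assumption \ref{ass0} (since $\mathcal{G}$ is undirected and $\bar{\mathcal{G}}$ has a spanning tree rooted at node $0$), the matrix $H$ in (\ref{eqev}) is symmetric positive definite. These are the facts the energy estimate rests on.

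Next I would take the candidate
$$V=\tfrac{1}{2}\tilde{\eta}^T(H\otimes I_m)\tilde{\eta}+\tfrac{1}{2\mu_2}\tilde{\omega}^T\tilde{\omega},$$
which is positive definite in $(\tilde{\eta},\tilde{\omega})$ because $H>0$. Differentiating along (\ref{reeq2}) and using $H^T=H$, the quadratic term $\tilde{\eta}^T(H\otimes S(\omega))\tilde{\eta}$ vanishes because $H\otimes S(\omega)$ is skew-symmetric (symmetric $\otimes$ skew-symmetric), while the diffusion term gives $-\mu_1\tilde{\eta}^T(H^2\otimes I_m)\tilde{\eta}=-\mu_1\|e_v\|^2$ by (\ref{eqev}). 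Rewriting the remaining coupling term of (\ref{reeq2b}) through $e_v=-(H\otimes I_m)\tilde{\eta}$ yields $-e_v^TS_d(\tilde{\omega})\eta$, and applying Proposition \ref{Proposition2} block-by-block (with $x=e_{vi}$, $z=\tilde{\omega}_i$, $y=\eta_i$) turns this into $-\tilde{\omega}^T\phi_d(e_v)\eta$, which exactly cancels the $\tilde{\omega}^T\phi_d(e_v)\eta$ produced by (\ref{reeq2abbbb}). I therefore expect
$$\dot V=-\mu_1\|e_v\|^2\le 0.$$
Hence $V$ is non-increasing, so $\tilde{\eta}$ and $\tilde{\omega}$ are bounded; then $\eta_i=\tilde{\eta}_i+v$ and $\omega_i=\tilde{\omega}_i+\omega$ are bounded for each $i$, and boundedness on $[0,\infty)$ rules out finite escape, giving existence for all $t\ge 0$.

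For the limits I would integrate $\dot V$ to get $e_v\in L_2$. Since $\dot e_v=-(H\otimes I_m)\dot{\tilde{\eta}}$ and every factor on the right of (\ref{reeq2b}) is bounded, $\dot e_v$ is bounded, so $\|e_v\|^2$ is uniformly continuous and Barbalat's lemma gives $e_v\to 0$; because $H$ is nonsingular, $\tilde{\eta}=-(H^{-1}\otimes I_m)e_v\to 0$, which is (\ref{eq101a}). Statement (\ref{eq101b}) is then immediate from (\ref{reeq2abbbb}): $\dot{\tilde{\omega}}=\mu_2\phi_d(e_v)\eta\to 0$ since $\phi_d$ is linear in the vanishing $e_v$ and $\eta$ is bounded.

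The main obstacle is (\ref{eq101c}). From (\ref{reeq2b}) we have $S_d(\tilde{\omega})\eta=\dot{\tilde{\eta}}-(I_N\otimes S(\omega)-\mu_1 H\otimes I_m)\tilde{\eta}$, and the bracketed term tends to $0$ because $\tilde{\eta}\to 0$; so it suffices to prove $\dot{\tilde{\eta}}\to 0$. Knowing only $\tilde{\eta}\to 0$ is not enough, so the key step is to show $\ddot{\tilde{\eta}}$ is bounded and then invoke the standard fact that a function converging to a limit with bounded second derivative has a vanishing first derivative. Boundedness of $\ddot{\tilde{\eta}}$ follows by differentiating (\ref{reeq2b}) once more: $\dot{\tilde{\eta}}$ is bounded, $\dot\eta=\dot{\tilde{\eta}}+S(\omega)v$ is bounded, and $\dot S_d(\tilde{\omega})$ is governed by the already-bounded $\dot{\tilde{\omega}}$. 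This yields $\dot{\tilde{\eta}}\to 0$ and hence $S_d(\tilde{\omega})\eta\to 0$, completing the proof.
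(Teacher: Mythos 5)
Your proof is correct and follows essentially the same route as the paper: the identical Lyapunov function $V=\tfrac{1}{2}\tilde{\eta}^T(H\otimes I_m)\tilde{\eta}+\tfrac{1}{2\mu_2}\tilde{\omega}^T\tilde{\omega}$, the same cancellation via Proposition \ref{Proposition2} giving $\dot V=-\mu_1\tilde{\eta}^T(H^2\otimes I_m)\tilde{\eta}\le 0$, and the same bounded-$\ddot{\tilde{\eta}}$ Barbalat argument to obtain $\dot{\tilde{\eta}}\to 0$ and hence (\ref{eq101c}). The only cosmetic difference is that you apply Barbalat's lemma to $\|e_v\|^2$ via an $L_2$ bound, whereas the paper applies it to $\dot V$ directly (via boundedness of $\ddot V$); the two are equivalent.
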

\begin{proof}
Consider the following Lyapunov function candidate for (\ref{reeq2}):
\begin{eqnarray}\label{reeq3b}
V  =\frac{1}{2}\left(\tilde{\eta}^T\left(H\otimes I_m\right)\tilde{\eta}+\mu_{2}^{-1}\tilde{\omega}^T\tilde{\omega}\right).
\end{eqnarray}
By Lemma 4 of \cite{Hu1}, under Assumption \ref{ass0},  the matrix $H$ is symmetric and positive definite.
Differentiating (\ref{reeq3b}) along the trajectory of (\ref{reeq2}) gives
\begin{eqnarray}\label{reeq4b}
\dot{V}  &=&\tilde{\eta}^T\left(H\otimes I_m\right)\dot{\tilde{\eta}}+\mu_{2}^{-1}\tilde{\omega}^T\dot{\tilde{\omega}}\nonumber\\
&=&\tilde{\eta}^T\left(H\otimes S\left(\omega\right)\right)\tilde{\eta}-\mu_1\tilde{\eta}^T\left(H^2\otimes I_m\right)\tilde{\eta} \nonumber\\
&&+ \tilde{\eta}^T\left(H\otimes I_m\right)S_d\eta+\mu_{2}^{-1}\tilde{\omega}^T\dot{\tilde{\omega}}.
\end{eqnarray}
Since $S(\omega)$ is skew symmetric and $H$ is symmetric, $H\otimes S\left(\omega\right)$ is skew symmetric. Thus,
we have
\begin{eqnarray}\label{reeq4bb}
\dot{V} &=& -\mu_1\tilde{\eta}^T\left(H^2\otimes I_m\right)\tilde{\eta}- e_{v}^TS_d\eta+\mu_{2}^{-1}\tilde{\omega}^T\dot{\tilde{\omega}}\nonumber\\
&=&-\mu_1\tilde{\eta}^T\left(H^2\otimes I_m\right)\tilde{\eta}\nonumber\\ &&-\sum\limits_{i=1}^{N}e_{vi}^TS\left(\tilde{\omega}_i\right)\eta_i+\mu_{2}^{-1}\tilde{\omega}^T\dot{\tilde{\omega}}.
\end{eqnarray}
By Proposition \ref{Proposition2}, we  have, for $i=1,\cdots,N$,
$$e_{vi}^TS\left(\tilde{\omega}_i\right)\eta_i=\tilde{\omega}_i^T\phi(e_{vi})\eta_i,$$  which can be put in the following compact form:
\begin{eqnarray}
e_{v}^TS_d (\tilde{\omega}) \eta  = \tilde{\omega}^T\phi_d(e_{v})\eta.
\end{eqnarray}
Thus, from (\ref{reeq4bb}), we have
\begin{eqnarray}
\dot{V}&=&-\mu_1\tilde{\eta}^T\left(H^2\otimes I_m\right)\tilde{\eta}-\tilde{\omega}^T\phi_d(e_{v})\eta+\mu_{2}^{-1}\tilde{\omega}^T\dot{\tilde{\omega}}\nonumber\\
&=&-\mu_1\tilde{\eta}^T\left(H^2\otimes I_m\right)\tilde{\eta}-\tilde{\omega}^T\left(\phi_d(e_{v})\eta-\mu_{2}^{-1}\dot{\tilde{\omega}}\right).
\end{eqnarray}
Using  (\ref{reeq2abbbb}) gives
\begin{eqnarray}\label{reeq4bbb}
\dot{V} &=& -\mu_1\tilde{\eta}^T\left(H^2\otimes I_m\right)\tilde{\eta}\leq0.
\end{eqnarray}
Since $V$ is positive definite and $\dot{V}$ is negative semi-definite, $V$ is bounded, which means $\tilde{\eta}$ and $\tilde{\omega}$ are bounded. From  (\ref{reeq2b}), $\dot{\tilde{\eta}}$ is bounded, which implies $\ddot{V}$ is bounded. By Barbalat's Lemma, we have
 $$\lim\limits_{t\rightarrow\infty}\dot{V}(t)=0,$$
 which implies (\ref{eq101a}). Thus, by (\ref{eqev}), we have
 $$ \lim\limits_{t\rightarrow\infty}e_{v}(t)=0,$$
  which together with (\ref{reeq2abbbb}) yields (\ref{eq101b}).
  To show (\ref{eq101c}), differentiating  $\dot{\tilde{\eta}}$ gives,
\begin{eqnarray}\label{reeq6b}
 \ddot{\tilde{\eta}} & = &\left(I_N\otimes S (\dot{\omega}) \right)\tilde{\eta} + \left(I_N\otimes S\left(\omega\right)-\mu_1H\otimes I_m\right) \dot{\tilde{\eta}}   \nonumber \\
&&+S_d (\dot{\tilde{\omega}})\eta + {S}_d (\tilde{\omega}) \dot{\eta}.
 \end{eqnarray}
We have shown  ${\tilde{\eta}}$, $\tilde{\omega}$,  and $\dot{\eta}_i$ are all bounded.
 From (\ref{reeq2}), $\dot{{\tilde{\eta}}}$ and $\dot{\tilde{\omega}}$ are also bounded. Thus, $\ddot{\tilde{\eta}}$ is bounded. By Barbalat's Lemma again, we have $\lim\limits_{t\rightarrow\infty}\dot{\tilde{\eta}}=0$, which together with (\ref{eq101a})
 implies (\ref{eq101c}).
\end{proof}
 \begin{rem}\label{vsxi}  As a result of Lemma \ref{lemma4}, we have, for $i=1,\cdots,N$,
\begin{subequations}\begin{align}
   \lim\limits_{t\rightarrow\infty} \left(C\eta_i-q_0\right)=& \lim\limits_{t\rightarrow\infty} C\left(\eta_i-v\right)=0,  \label{conzero}\\
     \lim\limits_{t\rightarrow\infty}  \left(C\dot{\eta}_i-\dot{q}_0\right)=&\lim\limits_{t\rightarrow\infty} C\left(S\left(\omega_i\right)\eta_i+\mu_1e_{vi}-\dot{v}\right)\nonumber\\
=& \lim\limits_{t\rightarrow\infty} C\left(S\left(\tilde{\omega}_i\right)\eta_i+\mu_1e_{vi}+S\left(\omega\right)\tilde{\eta}_i\right) \nonumber\\
    =&0.\label{dotconzero}
 \end{align}
\end{subequations}
\end{rem}
Lemma \ref{lemma4} does not guarantee
$\lim\limits_{t\rightarrow\infty} \tilde{\omega} =0$. It is possible to make $\lim\limits_{t\rightarrow\infty} \tilde{\omega}=0$ if the signal
$v (t)$ is persistently exciting in the following sense.

\begin{defi}\cite{r18} A bounded piecewise continuous function $f:[0,+\infty)\mapsto \mathds{R}^n$ is said to be persistently exciting (PE) if there exist positive constants $\epsilon$, $t_0$, $T_0$ such that,
$$\frac{1}{T_0}\int^{t+T_0}_{t} f(s) f^T (s) ds\geq\epsilon I_n,~~~~\forall t\geq t_0$$
\end{defi}

\begin{rem}\label{lemmapeini}
A  piecewise continuous function $f:[0,\infty)\mapsto \mathds{R}^n$ is said to have spectral lines at frequencies $\omega_1,\cdots,\omega_n$, if,
for $k=1,\cdots,n$,
$$\lim\limits_{\delta\rightarrow\infty}\frac{1}{\delta}\int_{t}^{t+\delta}f(s)e^{-jw_ks}ds=\hat{f}(w_k)\neq0,$$
uniformly in $t$ \cite{boyd}. It was established in Lemma 3.4 of \cite{boyd}  that $f(t)$ is PE if $\hat{f}(w_k)$, $k=1,\cdots,n$, are linearly independent over $\mathds{C}^n$.
\end{rem}

We also need the following result which is taken from Lemma 2.4 of \cite{chen2015stabilization}.

\begin{lem}\label{lemmape} Consider a continuously differentiable function $g:[0,+\infty)\mapsto \mathds{R}^n$ and a bounded piecewise continuous function $f:[0,+\infty)\mapsto \mathds{R}^n$, which satisfy
$$\lim_{t\rightarrow \infty}g^T(t)f(t)=0.$$
Then, $\lim_{t\rightarrow \infty}g(t)=0$ holds under the following two conditions:
\begin{enumerate}
  \item[(i)]  $\lim_{t\rightarrow \infty}\dot{g}(t)=0$
  \item[(ii)]  $f(t)$ is persistently exciting.
\end{enumerate}
\end{lem}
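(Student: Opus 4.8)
The plan is to exploit the persistency of excitation of $f$ together with the fact that condition (i) forces $g$ to be essentially constant over any fixed-length time window. The guiding identity is that, for a fixed vector $w\in\mathds{R}^{n}$, the persistency-of-excitation inequality gives
$$w^{T}\left(\frac{1}{T_0}\int_{t}^{t+T_0}f(s)f^{T}(s)\,ds\right)w\ge \epsilon\,\|w\|^{2},$$
and the left-hand side is precisely $\frac{1}{T_0}\int_{t}^{t+T_0}\bigl(w^{T}f(s)\bigr)^{2}\,ds$. Applying this with the frozen value $w=g(t)$ at the left endpoint of the window will let me bound $\|g(t)\|$ by the size of $g^{T}(t)f(s)$ over $s\in[t,t+T_0]$, which in turn I can control using the two hypotheses.

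First I would fix an arbitrary target accuracy $\gamma>0$ and use the limiting hypotheses to choose a common threshold $T^{*}$. Because $f$ is bounded, say $\|f(s)\|\le M$ for all $s$, and because $\lim_{t\to\infty}\dot{g}(t)=0$, I can pick $T^{*}$ so large that $\|\dot{g}(t)\|\le\gamma/T_0$ for $t\ge T^{*}$; integrating then yields $\|g(t)-g(s)\|\le\gamma$ for every $s\in[t,t+T_0]$ whenever $t\ge T^{*}$, i.e. $g$ varies by at most $\gamma$ across each window. Simultaneously, since $\lim_{t\to\infty}g^{T}(t)f(t)=0$, I can enlarge $T^{*}$ so that $|g^{T}(s)f(s)|\le\gamma$ for all $s\ge T^{*}$, and finally enlarge it once more so that $T^{*}\ge t_0$, making the persistency-of-excitation inequality valid on every window $[t,t+T_0]$ with $t\ge T^{*}$.

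The central estimate is then the splitting
$$g^{T}(t)f(s)=g^{T}(s)f(s)+\bigl(g(t)-g(s)\bigr)^{T}f(s),$$
which, combined with the three bounds above, gives $|g^{T}(t)f(s)|\le\gamma(1+M)$ for all $s\in[t,t+T_0]$ and $t\ge T^{*}$. Plugging $w=g(t)$ into the persistency-of-excitation inequality and using this pointwise bound inside the integral produces
$$\epsilon\,\|g(t)\|^{2}\le\frac{1}{T_0}\int_{t}^{t+T_0}\bigl(g^{T}(t)f(s)\bigr)^{2}\,ds\le\gamma^{2}(1+M)^{2},$$
so that $\|g(t)\|\le\gamma(1+M)/\sqrt{\epsilon}$ for all $t\ge T^{*}$. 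Since $\gamma$ was arbitrary, this establishes $\lim_{t\to\infty}g(t)=0$.

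I expect the only genuinely delicate point to be the ``freezing'' step: one must verify that condition (i) alone, through the bound $\|g(t)-g(s)\|\le\int_{t}^{s}\|\dot{g}(\tau)\|\,d\tau$, is strong enough to replace the true $g(s)$ inside the integral by the constant $g(t)$ with an error that is uniform across the window and tends to $0$. Everything else is a routine combination of the boundedness of $f$, the hypothesis $g^{T}f\to0$, and the quadratic-form reading of the persistency-of-excitation condition; in particular, no Barbalat-type argument is required here.
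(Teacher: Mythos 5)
Your proof is correct, but the point of comparison is moot: the paper does not prove this lemma at all --- it imports it verbatim, citing Lemma 2.4 of the monograph by Chen and Huang, so there is no in-paper argument to measure your proposal against. Taken on its own, your argument is sound and complete. Applying the persistency-of-excitation inequality to the frozen vector $w=g(t)$ is legitimate because the inequality holds for every fixed vector, and the quadratic-form identity $w^{T}\bigl(\frac{1}{T_0}\int_{t}^{t+T_0}f(s)f^{T}(s)\,ds\bigr)w=\frac{1}{T_0}\int_{t}^{t+T_0}\bigl(w^{T}f(s)\bigr)^{2}\,ds\ge\epsilon\|w\|^{2}$ is exactly right. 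Your three quantifier choices (enlarging $T^{*}$ so that $\|\dot g(t)\|\le\gamma/T_0$, so that $|g^{T}(s)f(s)|\le\gamma$, and so that $T^{*}\ge t_0$) are all available from conditions (i), the hypothesis $g^{T}f\to0$, and the PE definition respectively; the splitting $g^{T}(t)f(s)=g^{T}(s)f(s)+\bigl(g(t)-g(s)\bigr)^{T}f(s)$ then gives the uniform window bound $|g^{T}(t)f(s)|\le\gamma(1+M)$, and hence $\epsilon\|g(t)\|^{2}\le\gamma^{2}(1+M)^{2}$ for all $t\ge T^{*}$, which with $\gamma$ arbitrary yields the limit. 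The ``delicate point'' you flag --- replacing $g(s)$ by $g(t)$ inside the integral using only condition (i) --- is handled correctly: continuous differentiability gives $\|g(t)-g(s)\|\le\int_{t}^{s}\|\dot g(\tau)\|\,d\tau\le T_0\cdot\gamma/T_0=\gamma$ uniformly over $s\in[t,t+T_0]$. You are also right that no Barbalat-type argument is needed here; in this paper Barbalat's lemma is used in the proof of Lemma 1 (the observer convergence), whereas the present lemma is a pure PE argument of precisely the kind you gave.
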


To ensure that the leader's state is PE, we need to strengthen Assumption \ref{ass1} to the following:

\begin{ass}\label{ass2} For all $\omega$, all the eigenvalues of $S(\omega)$ are simple with zero real part.
\end{ass}

\begin{rem} Under Assumption \ref{ass2},  if $m$ is even, then
  we can  assume that $S (\omega)$ takes the following form:
\begin{eqnarray}\label{skewe}
  S(\omega)&=& \diag(\omega)\otimes a;
\end{eqnarray}
and, if $m$ is odd, then we can  assume that $S (\omega)$ takes the following form:
\begin{eqnarray}\label{skew}
  S(\omega)&=& \left[
              \begin{array}{cc}
               0 &  0 \\
                0  & \diag({\omega})\otimes a \\
              \end{array}
            \right],
\end{eqnarray}
For convenience,  we assume $m$ is even.  The case where $m$ is odd can be studied in a way similar to the case where $m$ is even.
\end{rem}

\begin{lem}\label{lemccci}  Under Assumptions  \ref{ass0} and \ref{ass2}, if, for all $i = 1, \cdots, l$ with $l = \frac{m}{2}$,   $\col (v_{2i-1} (0), v_{2i} (0)) \neq 0$,  then,\\
(i) $v (t)$ is PE; and \\
(ii)
\EQ \label{tomega}
\lim\limits_{t\rightarrow\infty} \tilde{\omega}=0.
\EN
\end{lem}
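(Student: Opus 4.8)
The plan is to prove the two claims in order, establishing the persistency of excitation of $v$ first and then feeding it, together with Lemma \ref{lemma4}, into the parameter-convergence argument. For part (i) I would exploit the explicit form of the leader trajectory. Under Assumption \ref{ass2} with $m$ even we have $S(\omega)=\diag(\omega)\otimes a$, so $v(t)=e^{S(\omega)t}v(0)$ splits into $l$ decoupled planar blocks, the $k$th of which is the rotation $\col(v_{2k-1}(t),v_{2k}(t))=e^{\omega_k a t}\col(v_{2k-1}(0),v_{2k}(0))$; each block therefore traces a circle of constant radius $r_k=\|\col(v_{2k-1}(0),v_{2k}(0))\|$, which is strictly positive by hypothesis. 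To turn this into PE I would invoke the spectral-line criterion recalled in Remark \ref{lemmapeini}: block $k$ carries spectral lines only at $\pm\omega_k$, and since Assumption \ref{ass2} makes the eigenvalues $\pm j\omega_k$ simple, the $l$ numbers $\omega_k$ are distinct, hence the $m$ frequencies $\{\pm\omega_k\}$ are distinct as well. The $m$ associated spectral-line vectors lie in the pairwise-disjoint coordinate planes $\mathrm{span}(e_{2k-1},e_{2k})$ and are nonzero precisely because $r_k>0$, so they are linearly independent over $\mathds{C}^{m}$, and Lemma 3.4 of \cite{boyd} then yields that $v(t)$ is PE. (Equivalently, one can check that the averaged autocorrelation $\lim_{T\to\infty}\frac1T\int_0^T v v^T\,ds=\frac12\diag(r_1^2,r_1^2,\dots,r_l^2,r_l^2)$ is positive definite, the distinctness of the $\omega_k$ being exactly what annihilates the cross-frequency terms.)

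For part (ii) I would start from the three conclusions of Lemma \ref{lemma4}, namely $\tilde\eta\to0$, $\dot{\tilde\omega}\to0$, and $S_d(\tilde\omega)\eta\to0$. The last gives $S(\tilde\omega_i)\eta_i\to0$ for each $i$; writing $\eta_i=v+\tilde\eta_i$ with $\tilde\eta_i\to0$ and $\tilde\omega_i$ bounded, the cross term $S(\tilde\omega_i)\tilde\eta_i\to0$, leaving $S(\tilde\omega_i)v\to0$. I would then put this in a form suited to Lemma \ref{lemmape}. A direct computation from the definitions (in the spirit of Proposition \ref{Proposition2}) gives $S(\tilde\omega_i)v=-\phi(v)^{T}\tilde\omega_i$, so the surviving components read $v_{2k-1}\tilde\omega_{i,k}\to0$ and $v_{2k}\tilde\omega_{i,k}\to0$; in particular $\tilde\omega_i^{T}f\to0$ for the $\mathds{R}^{l}$-valued signal $f=\col(v_1,v_3,\dots,v_{m-1})$. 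This $f$ is bounded, piecewise continuous, and PE, its PE being the single-tone-per-coordinate consequence of part (i) and the distinctness of the $\omega_k$. Taking $g=\tilde\omega_i$, which is $C^1$ with $\dot g=\dot{\tilde\omega}_i\to0$, the two hypotheses of Lemma \ref{lemmape} hold, so $\tilde\omega_i\to0$ for every $i$, i.e.\ $\tilde\omega\to0$.

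The main obstacle is part (i). Although the rotating-block picture makes it visually obvious that $v$ excites $l$ distinct tones with nonzero amplitudes, the real work is verifying the PE inequality rigorously, and this is the one place where Assumption \ref{ass2} (distinct frequencies, via simplicity of the eigenvalues) and the per-block nonvanishing of $v(0)$ are both genuinely needed; were two frequencies to coincide, the spectral-line vectors would fail to span $\mathds{C}^{m}$ and $v$ would not be PE. Once $v$ is known to be PE, part (ii) is a fairly mechanical assembly of Lemma \ref{lemma4}, the identity $S(\tilde\omega_i)v=-\phi(v)^{T}\tilde\omega_i$, and Lemma \ref{lemmape}. I note in passing that the special structure $\phi(v)\phi(v)^{T}=\diag(r_1^2,\dots,r_l^2)$, being constant and positive definite, would even allow one to bypass the PE machinery in (ii) and read off $\tilde\omega_i\to0$ directly from $S(\tilde\omega_i)v\to0$; routing through Lemma \ref{lemmape} is nonetheless preferable here as it keeps the argument aligned with the tools already established and with the learning interpretation of the result.
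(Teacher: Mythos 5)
Your proof is correct and follows essentially the same route as the paper's: part (i) via the explicit sinusoidal form of $v(t)$ and the Boyd--Sastry spectral-line criterion of Remark \ref{lemmapeini} (the paper assembles PE block-by-block and then combines blocks through vanishing cross-correlations, while you verify the linear independence of all $m$ spectral-line vectors at once), and part (ii) by feeding the conclusions of Lemma \ref{lemma4} into Lemma \ref{lemmape}, just as the paper does. Your passing remark that $\phi(v)\phi(v)^T=\diag(r_1^2,\dots,r_l^2)$ is constant and positive definite, so that $\tilde{\omega}_i\rightarrow 0$ could be read off directly without the PE machinery, is a valid and slightly more elementary shortcut that the paper does not take.
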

\begin{proof}
Part (i) \\
For each $ i = 1, \cdots, \frac{m}{2}$,
\EQQ
v (t) = \left [\begin{array}{c}
C_1 \sin (\omega_{01} t + \psi_1) \\
C_1 \cos (\omega_{01} t + \psi_1) \\
\vdots \\
C_{l} \sin (\omega_{0l} t + \psi_l) \\
C_{l} \cos (\omega_{0l} t + \psi_l)
\end{array} \right ]
\ENN
where,   for $i = 1, \cdots, l$, $C_i = \sqrt{v^2_{2i-1} (0) + v^2_{2i} (0)}$ and $ \tan \psi_i =  \frac{v_{2i-1} (0)}{v_{2i} (0)}$. Let $f_i(t)=\col (v_{2i-1} (t), v_{2i} (t))$. Then, we have,
for $i=1,\cdots,\frac{m}{2}$,
\EQQ f_i(t)&=&\left[
                                                                      \begin{array}{c}
                                                                        v_{2i-1} (t) \\
                                                                       v_{2i} (t) \\
                                                                      \end{array}
                                                                    \right]\\
&=&\left[
                                                                      \begin{array}{c}
                                                                        C_i \sin (\omega_{0i} t + \psi_i) \\
                                                                       C_i \cos (\omega_{0i} t + \psi_i) \\
                                                                      \end{array}
                                                                    \right]\\
&=&\left[
                                                                                   \begin{array}{c}
                                                                                      C_i\frac{e^{ \left(\omega_{0i} t+\psi_i\right)j}-e^{ -\left(\psi_i+\omega_{0i} t\right)j}}{2j} \\
                                                                                      C_i\frac{e^{ \left(\omega_{0i} t+\psi_i\right)j}+e^{ -\left(\omega_{0i} t+\psi_i\right)j}}{2} \\
                                                                                   \end{array}
                                                                                 \right]\\
&=&\frac{C_i}{2}\left[
                                                                                   \begin{array}{c}
                                                                                      e^{ -\psi_ij}e^{-\omega_{0i}j t }j-e^{ \psi_ij}e^{\omega_{0i}j t }j \\
                                                                                      e^{ \psi_ij}e^{\omega_{0i}j t }+e^{ -\psi_ij}e^{-\omega_{0i}j t } \\
                                                                                   \end{array}
                                                                                 \right],
                                                                                 \ENN

where $j=\sqrt{-1}$. It can be verified that, for $i=1,\cdots,\frac{m}{2}$,
\begin{subequations} \begin{align}
   \hat{f}_i\left(\omega_{0i}\right) &= \lim\limits_{\delta\rightarrow\infty}\frac{1}{\delta}\int_{\epsilon}^{\epsilon+\delta}{f_{i}(t)e^{-\omega_{0i}jt}}dt\nonumber \\
   &=\frac{C_i}{2}\col\left(-e^{ \psi_ij}j,e^{ \psi_ij}\right) =\frac{C_ie^{ \psi_ij}}{2}\col\left(-j ,1\right), \nonumber \\
    \hat{f}_i\left(-\omega_{0i}\right)& = \lim\limits_{\delta\rightarrow\infty}\frac{1}{\delta}\int_{\epsilon}^{\epsilon+\delta}{f_{i}(t)e^{\omega_{0i}jt}}dt\nonumber \\
    &=\frac{C_i}{2}\col\left(e^{ -\psi_ij}j,e^{ -\psi_ij}\right)=\frac{C_ie^{ -\psi_ij}}{2}\col\left(j,1\right).\nonumber
 \end{align}
\end{subequations}
Under Assumption \ref{ass2},  for all $i = 1, \cdots, l$, $\omega_i$ are distinct.
Thus, for $i=1,\cdots,\frac{m}{2}$, $\hat{f}_{i}\left(\omega_{0i}\right)$ and $\hat{f}_{i}\left(-\omega_{0i}\right)$  are linearly independent over $\mathds{C}^2$ if and only if $C_i\neq0$.
Since, none of $C_i$ is equal to zero due to our assumption on
the initial condition $v (0)$, by Remark \ref{lemmapeini}, for $i=1,\cdots,\frac{m}{2}$, $f_i(t)=\col (v_{2i-1} (t), v_{2i} (t))$ is PE.
Finally, noting that, for all $i, j=1,\cdots,\frac{m}{2}$ and $ i \neq j$,
$$\frac{1}{T_0}\int^{t+T_0}_{t} f_i(s) f_j^T (s) ds = 0_{2\times 2},~~~~\forall t\geq t_0$$ concludes that
$v (t) = \mbox{col } (f_1 (t), \cdots, f_l (t)$ is PE.

 Part (ii) \\ Note that the satisfaction of Assumption \ref{ass2} implies that of Assumption \ref{ass1}. Thus,
 (\ref{eq101a}) and (\ref{eq101c}) hold, which implies, for $i = 1, \cdots, N$,
\EQQ
&&\lim\limits_{t\rightarrow\infty} S\left(\tilde{\omega}_i\right) v (t) \\
&=& \lim\limits_{t\rightarrow\infty} S\left(\tilde{\omega}_i\right) ({\eta}_i - \tilde{\eta}_i) \\
&=& \lim\limits_{t\rightarrow\infty} S\left(\tilde{\omega}_i\right) {\eta}_i = 0.
\ENN
Since $v (t)$ is PE, by Lemma \ref{lemmape}, we have (\ref{tomega}).
\end{proof}
  \begin{rem}\label{remwu} Reference \cite{wu2017adaptive} introduced an adaptive reference generator for the uncertain exosystem  (\ref{leaderall})
with
\begin{eqnarray*}
 C &=& \left[
                   \begin{array}{cccc} 1 & 0 &\cdots& 0 \end{array}
                 \right],
\end{eqnarray*}
and
\begin{eqnarray*}
  S(\rho) &=& \left[
                   \begin{array}{c|c}
                     0 & I_{m-1} \\
                     \hline
                     -s_{1}  &-s_{2} ,\cdots,-s_{m}  \\
                   \end{array}
                 \right],
\end{eqnarray*}
where $\rho = \col ( s_{1}  s_{2}, \cdots, s_{m})$ is in a known compact set $\mathcal{Q} \subset \mathds{R}^{m}$,  $S(\rho)\in \mathds{R}^{m\times m}$ satisfies Assumption \ref{ass1}
for all $\rho \in \mathcal{Q}$.

Under the assumption (Assumption 9 of \cite{wu2017adaptive})
that there exists a positive definite matrix $P\in \mathds{R}^{m\times m}$, independent of $\rho $,  that satisfies, for all $\rho  \in\mathcal{Q}$,  the following equation
\begin{eqnarray}\label{areqa}
P^TS(\rho)+S(\rho)P-2xPC^TCP+2b I_m<0,
\end{eqnarray}
where $b,x$ are some positive numbers,  reference \cite{wu2017adaptive} proposed, in  terms of our notation, a dynamic compensator of the following form:
\begin{subequations} \label{eqaobserver} \begin{align}
   \dot{\eta}_i &= \left[
                   \begin{array}{c|c}
                     0 & I_{m-1} \\
                     \hline
                     -\hat{s}_{1,i} &-\hat{s}_{2,i},\cdots,-\hat{s}_{m,i}  \\
                   \end{array}
                 \right] \eta_i+PC^TCe_{vi},\label{eqaobservera}\\
    \dot{\hat{s}}_{i,k} &=  -\left(e_{vi}^TP^{-1}\right)^{(m)}\eta_{i}^{(k)},\label{eqaobserverb}
 \end{align}
\end{subequations}
 where, for $k=1,\cdots,m$ and $i=1,\cdots,N$,  $\left(e_{vi}^TP^{-1}\right)^{(m)}$ denotes the last element of the row vector $e_{vi}^TP^{-1}$, the $\eta_{i}^{(k)}$ denotes the $kth$ element in the column vector $\eta_i$. It was shown that, under some additional assumptions,  the dynamic compensator (\ref{eqaobserver}) can also estimate the state of (\ref{leaderall}) and the unknown parameters
 $s_{i} $, $ i = 1, \cdots, m$.  Nevertheless, since such matrix $P$ is generally a function of $\rho$, it is unlikely that
  Assumption 9 of \cite{wu2017adaptive} can be satisfied for all $\rho$ in a given compact set $\mathcal{Q}$.

 Other advantages of our adaptive distributed observer (\ref{compensator2})  include  that we only need to estimate $\frac{m}{2}$ unknown parameters by $\frac{mN}{2}$ equations instead of $m$ parameters by $m \times N$ equations as in (\ref{eqaobserver})),
 we do not limit our unknown parameters to lie in some known compact set $\mathcal{Q}$, and our $C$ matrix can have multiple rows.
 \end{rem}

\section{Leader-following Consensus of Multiple EL Systems} \label{section4}
In this section, we will apply our adaptive distributed observer (\ref{compensator2})  to synthesize a distributed control law to solve the leader-following consensus problem of multiple Euler-Lagrange systems
with an uncertain leader system. As in \cite{cai2016leader}, for $i=1,\cdots,N$, let
\begin{subequations}\label{claw}\begin{align}
\dot{q}_{ri}&=CS\left(\omega_i\right)\eta_i-\alpha\left(q_i-C\eta_i\right), \label{claw1}\\
 s_i&=\dot{q}_i-\dot{q}_{ri},\label{claw2}
 \end{align}
\end{subequations}
where $\alpha$ is positive number.

We now propose our control law as follows:
\begin{subequations}\label{dstureq11}
 \begin{align}
  \tau_i&=-K_{i}s_i+Y_i\hat{\Theta}_i,\label{dstureq11i}\\
   \dot{\hat{\Theta}}_i&=-\Lambda^{-1}_i Y_{i}^Ts_i,\label{dstureq11ii}\\
  \dot{\eta}_i &=S\left(\omega_i\right)\eta_i + \mu_1\sum\limits_{j \in \mathcal{\bar{N}}_i}(\eta_j-\eta_i),\label{compensator2ccb}\\
\dot{\omega}_{i} &=\mu_2\phi(e_{vi})\eta_i,
 \end{align}
\end{subequations}
where, for $i=1,\cdots,N$,  $K_i$ and $\Lambda_i$ are positive definite matrices, $\hat{\Theta}_i$ are vectors for  estimating $\Theta_i$,
 $\Lambda_i$ are  diagonal matrices with positive diagonal entries which determine the update rates of $\hat{\Theta}_i$, and
the matrix function $\phi(\cdot)$ is as defined in (\ref{skewopen}).

\begin{rem} Since the system matrix $S$ of our leader system is not known exactly, the third equations of our control law contain $N$  unknown vectors $\omega_i$, which are updated
by the fourth equations of (\ref{dstureq11}). In contrast, in \cite{cai2016leader}, the system matrix $S (\omega)$ of the leader system must be known to all the children of the leader.
Therefore, the control law in \cite{cai2016leader} relies on the exact knowledge of the matrix $S (\omega)$.
\end{rem}
\begin{thm}Consider systems (\ref{MARINEVESSEL1}), (\ref{leaderall}) and the graph $\bar{\mathcal{G}}$. Under Assumptions \ref{ass0} and \ref{ass1}, Problem \ref{ldlesp} is solvable by the control law (\ref{dstureq11}).
\end{thm}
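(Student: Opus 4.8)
The plan is to exploit the certainty-equivalence structure of (\ref{dstureq11}): its last two equations are exactly the adaptive distributed observer (\ref{compensator2}), which is driven only by the graph and the leader and is completely decoupled from the mechanical states. Hence I would first invoke Lemma \ref{lemma4} to conclude that $\eta_i,\omega_i,\dot{\eta}_i,\dot{\omega}_i$ exist, are bounded on $[0,\infty)$, and satisfy $\tilde{\eta}\to0$ and $S_d(\tilde{\omega})\eta\to0$; together with Remark \ref{vsxi} this gives $C\eta_i\to q_0$ and $CS(\omega_i)\eta_i\to\dot{q}_0$ (the latter since $e_{vi}\to0$). The observer thus supplies each follower with asymptotically exact copies of the leader's position and velocity, and the remaining task is a standard Slotine--Li tracking analysis against these estimates.

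Next I would compute the closed-loop dynamics of the sliding variable $s_i$. Writing $\dot{q}_i=s_i+\dot{q}_{ri}$ and $\ddot{q}_i=\dot{s}_i+\ddot{q}_{ri}$ in (\ref{MARINEVESSEL1}) and applying Property~2 with $a=\ddot{q}_{ri}$, $b=\dot{q}_{ri}$ gives $M_i\ddot{q}_{ri}+C_i\dot{q}_{ri}+G_i=Y_i\Theta_i$, so that (\ref{dstureq11i}) yields
$$M_i\dot{s}_i=-(C_i+K_i)s_i+Y_i\tilde{\Theta}_i,$$
where $\tilde{\Theta}_i=\hat{\Theta}_i-\Theta_i$ and $\dot{\tilde{\Theta}}_i=-\Lambda_i^{-1}Y_i^Ts_i$ by (\ref{dstureq11ii}). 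With the Lyapunov candidate $V_1=\tfrac12\sum_{i=1}^N\big(s_i^TM_is_i+\tilde{\Theta}_i^T\Lambda_i\tilde{\Theta}_i\big)$, the cross terms $s_i^TY_i\tilde{\Theta}_i$ cancel and Property~\ref{prop1} removes $\tfrac12 s_i^T(\dot{M}_i-2C_i)s_i$, leaving $\dot{V}_1=-\sum_i s_i^TK_is_i\le0$. This already shows $s_i,\tilde{\Theta}_i\in L_\infty$.

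To obtain boundedness of the mechanical states, I would introduce $\bar{q}_i=q_i-C\eta_i$; combining (\ref{claw}) with (\ref{compensator2ccb}) gives the stable filter $\dot{\bar{q}}_i=-\alpha\bar{q}_i+s_i-\mu_1 Ce_{vi}$, whose inputs $s_i$ and $e_{vi}$ are bounded, so $\bar{q}_i$, and hence $q_i=\bar{q}_i+C\eta_i$ and $\dot{q}_i=s_i+CS(\omega_i)\eta_i-\alpha\bar{q}_i$, are bounded. With all arguments of $Y_i$, $M_i$ and $C_i$ now bounded, $\dot{s}_i$ is bounded, so $s_i^TK_is_i$ is uniformly continuous; since $\int_0^\infty s_i^TK_is_i\,dt\le V_1(0)<\infty$, Barbalat's Lemma yields $s_i\to0$.

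Finally I would close the cascade: with $s_i\to0$ and $e_{vi}\to0$ (from $\tilde{\eta}\to0$ and (\ref{eqev})), the filter $\dot{\bar{q}}_i=-\alpha\bar{q}_i+s_i-\mu_1 Ce_{vi}$ has a vanishing input, so $\bar{q}_i\to0$, i.e. $q_i-C\eta_i\to0$; with $C\eta_i\to q_0$ this gives $q_i\to q_0$. Likewise $\dot{q}_i=s_i+CS(\omega_i)\eta_i-\alpha\bar{q}_i\to\dot{q}_0$ because $CS(\omega_i)\eta_i\to\dot{q}_0$ by Remark \ref{vsxi}. I expect the main obstacle to be the interconnection bookkeeping rather than any single estimate: one must verify that solutions extend to all $t\ge0$ with no finite escape before Barbalat applies, and that the observer errors $e_{vi}$ and $S(\tilde{\omega}_i)\eta_i$ enter the tracking loop only as vanishing perturbations, so that the cascade of the (already convergent) observer into the $s_i$-subsystem and then into the $\bar{q}_i$-filter creates no destabilizing feedback. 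Establishing that $Y_i$ remains bounded along trajectories, so that the Lyapunov computation is valid globally in time, is the delicate part.
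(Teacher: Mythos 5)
Your proposal is correct and follows essentially the same route as the paper: invoke Lemma \ref{lemma4} for the decoupled observer, derive the Slotine--Li closed loop $M_i\dot{s}_i=-(C_i+K_i)s_i+Y_i\tilde{\Theta}_i$ with the Lyapunov function $\tfrac{1}{2}\left(s_i^TM_is_i+\tilde{\Theta}_i^T\Lambda_i\tilde{\Theta}_i\right)$, establish boundedness of $q_i,\dot{q}_i$ through a stable first-order filter, apply Barbalat's lemma to get $s_i\to0$, and close the cascade via $\dot{e}_i+\alpha e_i=s_i-\mu_1Ce_{vi}$ together with Remark \ref{vsxi}. The only differences are cosmetic: a summed rather than per-agent Lyapunov function, the integral rather than derivative form of Barbalat's lemma, and obtaining velocity convergence from $CS(\omega_i)\eta_i\to\dot{q}_0$ instead of $C\dot{\eta}_i\to\dot{q}_0$.
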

\begin{proof}
Differentiating (\ref{claw}) gives, for $i=1,\cdots,N$,
\begin{subequations}\label{dclaw}\begin{align}
\ddot{q}_{ri}&=CS\left(\omega_i\right)\dot{\eta}_i+CS\left(\dot{\omega}_i\right)\eta_i-\alpha\left(\dot{q}_i-C\dot{\eta}_i\right),\label{dclaw1}\\
 \dot{s}_i&=\ddot{q}_i-\ddot{q}_{ri}. \label{dclaw2}
 \end{align}
\end{subequations}
By Property \ref{prop1}, there exists a known matrix $Y_i=Y_i\left(q_i,\dot{q}_i, \ddot{q}_{ri}, \dot{q}_{ri} \right)$ and an unknown constant vector $\Theta_i$ such that,
\begin{equation}\label{claw0}
 Y_i\Theta_i=M_i\left(q_i\right)\ddot{q}_{ri}+C_i\left(q_i,\dot{q}_i\right)\dot{q}_{ri}+G_i\left(q_i\right).
\end{equation}
Substituting $Y_i\Theta_i$ into (\ref{MARINEVESSEL1}) gives
\begin{eqnarray}\label{undistur2}
  M_i\left(q_i\right)\ddot{q}_i+C_i\left(q_i,\dot{q}_i\right)\dot{q}_i+G_i\left(q_i\right) -M_i\left(q_i\right)\ddot{q}_{ri}&&\nonumber\\
 -C_i\left(q_i,\dot{q}_i\right)\dot{q}_{ri}-G_i\left(q_i\right)+Y_i\Theta_i=&\tau_i.&
\end{eqnarray}
Using  (\ref{claw2}) and (\ref{dclaw2})  in (\ref{undistur2}) gives
\begin{eqnarray}\label{undistur3}
M_i\left(q_i\right)\dot{s}_i=-C_i\left(q_i,\dot{q}_i\right)s_i+\tau_i-Y_i\Theta_i.\label{undistur1i}
\end{eqnarray}

Substituting (\ref{dstureq11i}) and (\ref{dstureq11ii}) into (\ref{undistur3}) gives, for $i=1,\cdots,N$,
 \begin{subequations}\label{MARINEVESSEL14}
\begin{align}
 M_i\left(q_i\right)\dot{s}_i&=-C_i\left(q_i,\dot{q}_i\right)s_i-K_is_i+Y_i\tilde{\Theta}_i,\\
  \dot{\hat{\Theta}}_i&=\Lambda_{i}^{-1}Y_{i}^Ts_i,\label{MARINEVESSEL14ii}
\end{align}
\end{subequations}
where $\tilde{\Theta}_i=\left(\hat{\Theta}_i-\Theta_i\right)$.

For $i=1,\cdots,N$,  let \begin{equation}\label{MARINEVESSEL15}
V_i=\frac{1}{2}\left(s_{i}^TM_i\left(q_i\right)s_i+\tilde{\Theta}_{i}^T\Lambda_{i}\tilde{\Theta}_{i}\right),
\end{equation}
Then, along the trajectory of (\ref{MARINEVESSEL14}),
\begin{eqnarray*} 
  \dot{V}_i&=&s_{i}^TM_i\left(q_i\right)\dot{s}_i+\frac{1}{2}s_{i}^T\dot{M}_i\left(q_i\right)s_i+\tilde{\Theta}_{i}^T\Lambda_{i}\dot{\tilde{\Theta}}_{i}\nonumber\\
  &=&s_{i}^T\left(-C_i\left(q_i,\dot{q}_i\right)s_i-K_is_i+Y_{i}\tilde{\Theta}_i\right)\nonumber\\
  &&+\frac{1}{2}s_{i}^T\dot{M}_i\left(q_i\right)s_i+\tilde{\Theta}_{i}^T\Lambda_{i}\dot{\tilde{\Theta}}_{i}.\nonumber \\
 \end{eqnarray*}
Noting $\left(\dot{M}_i\left(q_i\right)-2C_i\left(q_i,\dot{q}_i\right)\right)$ is skew symmetric gives
\begin{eqnarray}\label{MARINEVESSEL16}
  \dot{V}_i&=&-s_{i}^TK_is_i, ~i=1,\cdots,N.
  \end{eqnarray}
 Since $K_i$ are positive definite,
 the vectors $s_{i}$ and $\tilde{\Theta}_i$ are bounded. We now further show $s_i\rightarrow0$ as $t\rightarrow\infty$ using Barbalat's lemma.
For this purpose, substituting (\ref{claw1}) into (\ref{claw2}) gives
\begin{equation}\label{MARINEVESSEL17}
 \dot{q}_i+\alpha q_i=s_i+CS\left(\omega_i\right)\eta_i+\alpha C\eta_i.
 \end{equation}
Since $s_i$ is already shown to be bounded, $S\left(\omega_i\right)$ and $\eta_i$ are also bounded by Lemma \ref{lemma4}, and $ \alpha  $ is positive, equation (\ref{MARINEVESSEL17}) can be viewed as a stable first order linear system in $q_i$ with a bounded input. Thus, both $q_i$ and $\dot{q}_i$ are bounded. Thus, from (\ref{claw1})  and
(\ref{dclaw1}), $\dot{q}_{ri}$ and $\ddot{q}_{ri}$ are bounded.
By equation (\ref{dclaw2}), $\dot{s}_i$ are also bounded. Thus $\ddot{V}_i$ are bounded, which implies $\dot{V}_i$ is uniformly continuous. By Barbalat's lemma, we have, for $i=1,\cdots,N$, $\dot{V}_i\rightarrow 0$ as $t\rightarrow \infty$, which implies  $s_i\rightarrow 0$ as $t\rightarrow \infty$.

Using (\ref{compensator2ccb}) in  (\ref{claw}) gives
   \begin{eqnarray}\label{MARINEVESSEL19}
   \dot{q}_{i}-C\dot{\eta}_i+\alpha\left(q_{i}-C\eta_i\right) =s_i-\mu_1C\sum\limits_{j \in \mathcal{\bar{N}}_i}(\eta_j-\eta_i).
  \end{eqnarray}
With $e_i=\left(q_i-C\eta_i\right)$, equation (\ref{MARINEVESSEL19}) can be rewritten as
   \begin{eqnarray}\label{reMARINEVESSEL19}
   &&\dot{e}_{i}+\alpha e_i  = s_i-\mu_1Ce_{vi}.
  \end{eqnarray}
Since,  by Lemma \ref{lemma4}, $\lim_{t\rightarrow\infty}e_{vi}= 0$, for $i=1,\cdots,N$, equation (\ref{reMARINEVESSEL19}) can be viewed as a stable first order differential equation in $e_i$ with
$\left(s_i-\mu_1Ce_{vi}\right)$ as the input, which are bounded over $t\geq 0$ and tend to zero as $t\rightarrow \infty$. Thus,  we conclude that both $e_i=\left(q_i-C\eta_i\right)$ and $\dot{e}_i=\left(\dot{q}_i-C\dot{\eta}_i\right)$ are bounded over $t\geq 0$ and will tend to zero as $t\rightarrow \infty$. These facts together with (\ref{conzero}) and (\ref{dotconzero}) complete the proof.
\end{proof}

\section{An Example}\label{section5}

Consider a group of six Euler-Lagrange systems, each of which describes a two-link robot arm whose motion equations taken from \cite{lewis1993control} are as follows:
\begin{equation}\label{numerica1}
  M_i\left(q_i\right)\ddot{q}_i+C_i\left(q_i,\dot{q}_i\right)\dot{q}_i+G_i\left(q_i\right)=\tau_i, i=1,\cdots,6,
\end{equation}
where $q_i=\col\left(q_{i1}, q_{i2}\right)$,
\begin{align}
  M_i\left(q_i\right) &= \left(
                            \begin{array}{cc}
                              a_{i1} +a_{i2}+2a_{i3}\cos q_{i2} &  a_{i2}+a_{i3}\cos q_{i2} \\
                               a_{i2}+ a_{i3}\cos q_{i2}&  a_{i2} \\
                            \end{array}
                          \right),\nonumber \\
  C_i\left(q_i,\dot{q}_i\right) &= \left(
                                      \begin{array}{cc}
                                        -a_{i3}\left(\sin q_{i2}\right)\dot{q}_{i2}& -a_{i3}\left(\sin q_{i2}\right)\left(\dot{q}_{i1}+\dot{q}_{i2}\right) \\
                                         a_{i3}\left(\sin q_{i2}\right)\dot{q}_{i1} & 0  \\
                                      \end{array}
                                    \right),
  \nonumber\\
 G_i\left(q_i\right) &=\left(
                          \begin{array}{c}
                            a_{i4}g\cos q_{i1}+a_{i5}g\cos\left( q_{i1}+ q_{i2}\right) \\
                            a_{i5}g\cos\left( q_{i1}+ q_{i2}\right) \\
                          \end{array}
                        \right),
 \nonumber
\end{align}
and the unknown vector $\Theta_i=\col \left(a_{i1},a_{i2},a_{i3},a_{i4},a_{i5}\right)$. The communication topology is shown in Figure.\ref{fig1}, which satisfies Assumption \ref{ass0}. The leader's signal is
$$q_0=\col\left(A_1 \sin(\omega_{01} t+ \phi_1), A_2 \sin(\omega_{02} t + \phi_2) \right),$$ where $A_1$, $A_2$, $\omega_{01}$ and $\omega_{02}$ are  arbitrary unknown positive real numbers;  and $\phi_1$ and $\phi_2$ are
arbitrary unknown real numbers.
This leader's signal can be
generated by (\ref{leaderall}) with
\begin{eqnarray*}
S=\left[
        \begin{array}{cccc}
          0 & \omega_{01} &0 &0\\
          -\omega_{01} &0 &0 & 0\\
            0 &0 & 0 & \omega_{02}\\
          0 & 0 & -\omega_{02} &0
        \end{array}
      \right],~C=\left[
        \begin{array}{cccc}
          1 & 0 & 0 & 0\\
          0 & 0 & 1  & 0\\
        \end{array}
      \right].
\end{eqnarray*}
Thus, Assumption \ref{ass2} is also satisfied.

By Lemma \ref{lemma4}, for $i=,1,\cdots,6$, we can design an adaptive distributed observer for (\ref{leaderall}) as follows:
\begin{subequations}\label{compensatorex}
\begin{align}
\dot{\eta}_i &=S\left(\omega_i\right)\eta_i + \mu_1\sum_{j \in \mathcal{\bar{N}}_i}(\eta_j-\eta_i),\label{compensator2bex}\\
\dot{\omega}_{i} &=\mu_2 \phi(e_{vi})\eta_i,~ i = 1, \cdots, 6, \label{compensator2cex}
\end{align}
\end{subequations}

where $\mu_1=10$, $\mu_2=20$ and $\omega_{i} \in \mathds{R}^2$.

Based on this observer, we can further synthesize a control law of the form (\ref{dstureq11}) with the following parameters: $K_i=20I_2$, $\alpha=10$, and $\Lambda_i=10$. For $i=1,\cdots,N$, the actual value of $\Theta_i$ are given as follows:
\begin{eqnarray*}
 && \Theta_1 = \col(0.64,1.10, 0.08,0.64,0.32), \\
  && \Theta_2 =\col(0.76,1.17,0.14,0.93,0.44),\\
  &&\Theta_3 = \col(0.91,1.26,0.22, 1.27,0.58),\\
  &&\Theta_4 = \col(1.10,1.36,0.32,1.67,0.73),\\
  &&\Theta_5 = \col(1.21,1.16,0.12,1.45, 1.03),\\
   &&\Theta_6 = \col(1.31,1.56,0.22,1.65,1.33).
\end{eqnarray*}
Simulation is conducted with the following initial condition: $q_{i}(0)=0$, $\hat{\Theta}_i(0)=0$, randomly chosen $\omega_i(0)$ in the range $[0,1]$, $i=1,\cdots,6$. The leader's initial condition is $v_0(0)=\col(1,0, 1,0)$,  and the actual unknown frequencies are $\omega_{01} = 4$ and $\omega_{02} = 2$.
 Figure \ref{fig2} shows the trajectories of $q_i$ and $\dot{q}_i$, respectively, for $i=1,\cdots,6$. Figure \ref{fig4} shows the tracking performance of $q_i$ and $\dot{q}_i$, for $i=1,\cdots,6$.

Since $v_0(0)=\col(1,0, 1,0)$, by Lemma \ref{lemccci}, $v (t)$ is PE. Thus,
  the distributed observer (\ref{compensator2}) will guarantee $\tilde{\omega}_i\rightarrow0$ as $t\rightarrow\infty$, for $i=1,\cdots,6$. Figures \ref{fig6} and \ref{fig5}
   show the trajectories of the two components of $\omega_i$ and $\tilde{\omega}_i$, respectively, for $i=1,\cdots,6$.

\begin{figure}
\centering
\setlength{\unitlength}{0.75mm}
\includegraphics[clip,scale=0.6]{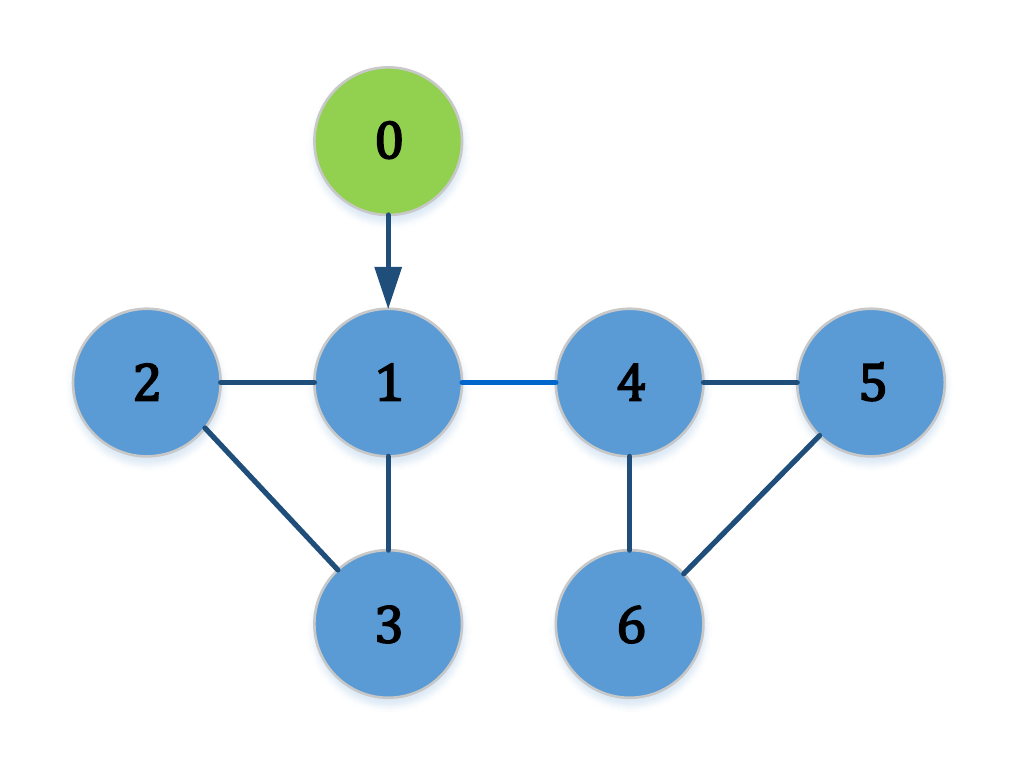}
\caption{ Communication topology $\bar{G}$}\label{fig1}
\end{figure}
\begin{figure}[ht]

  \centering
\epsfig{figure=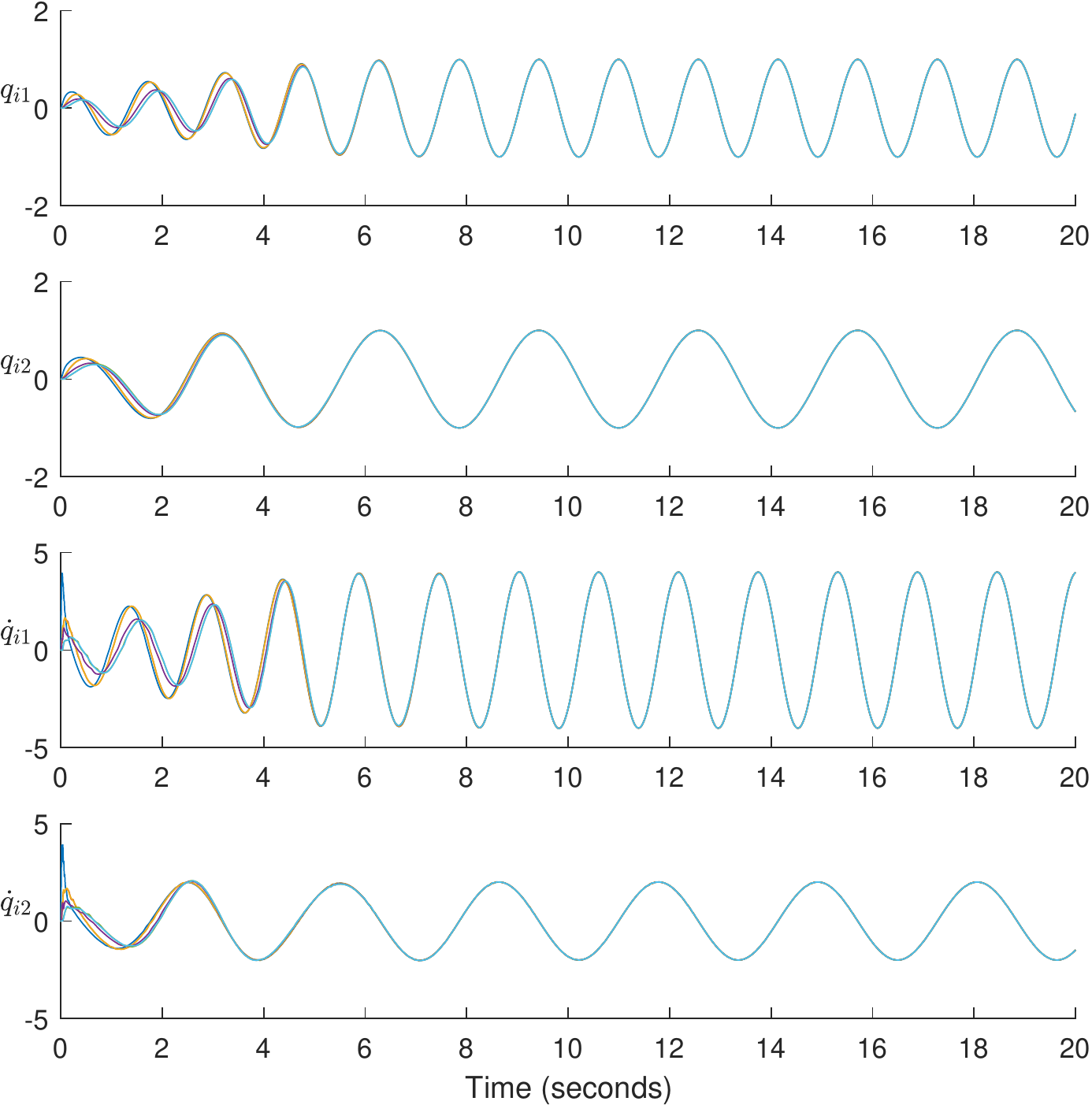,height=3.6in}

  \caption{Trajectory of $q_i$ and $\dot{q}_i$,~ $i = 1, \cdots, 6.$}\label{fig2}
\end{figure}

\begin{figure}[ht]

  \centering
  \epsfig{figure=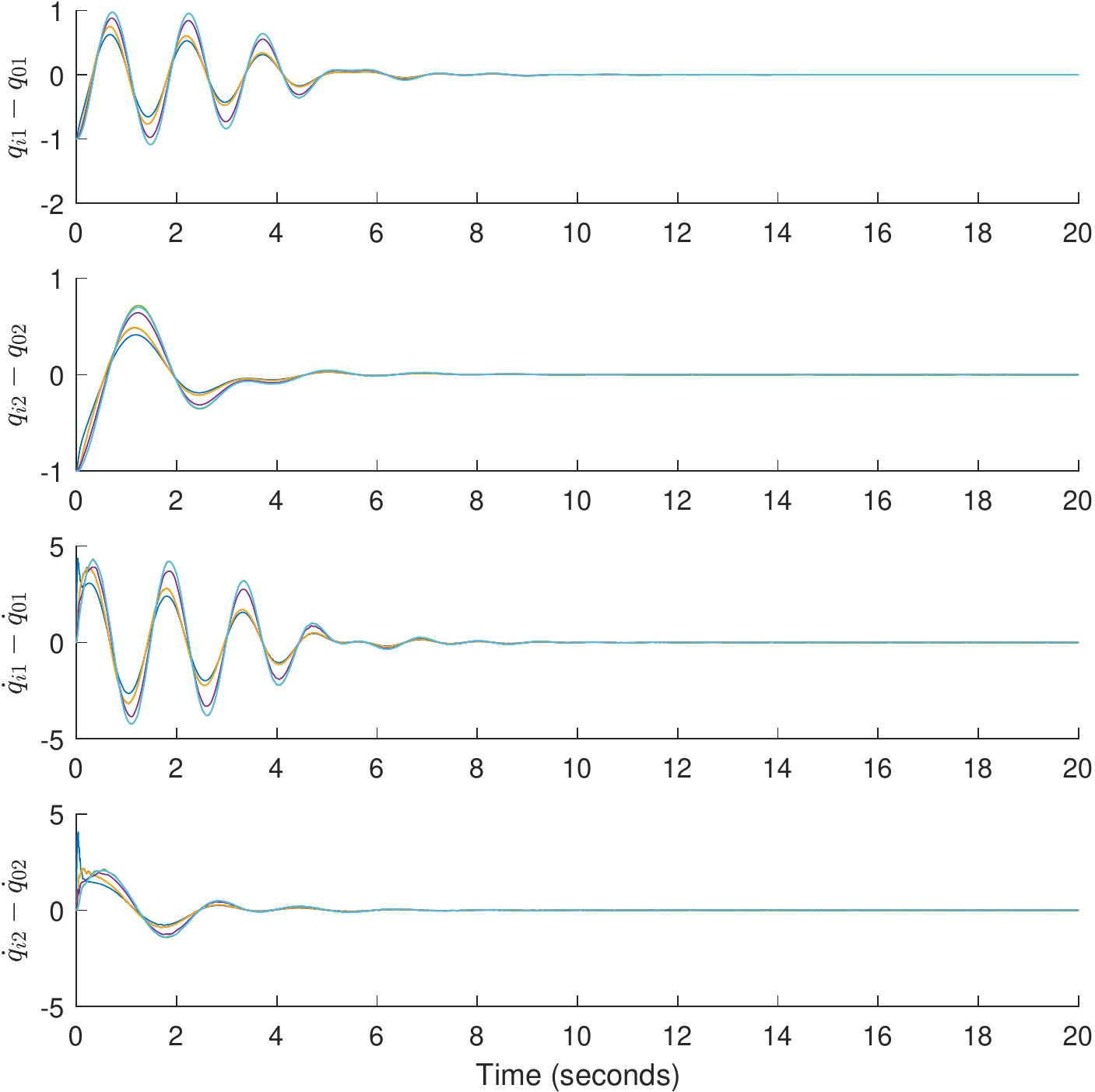,height=3.6in}

  \caption{Tracking Performance of $q_{i}$ and $\dot{q}_i$,~ $i = 1, \cdots, 6.$}\label{fig4}
\end{figure}
\begin{figure}[ht]
  \centering
  \epsfig{figure=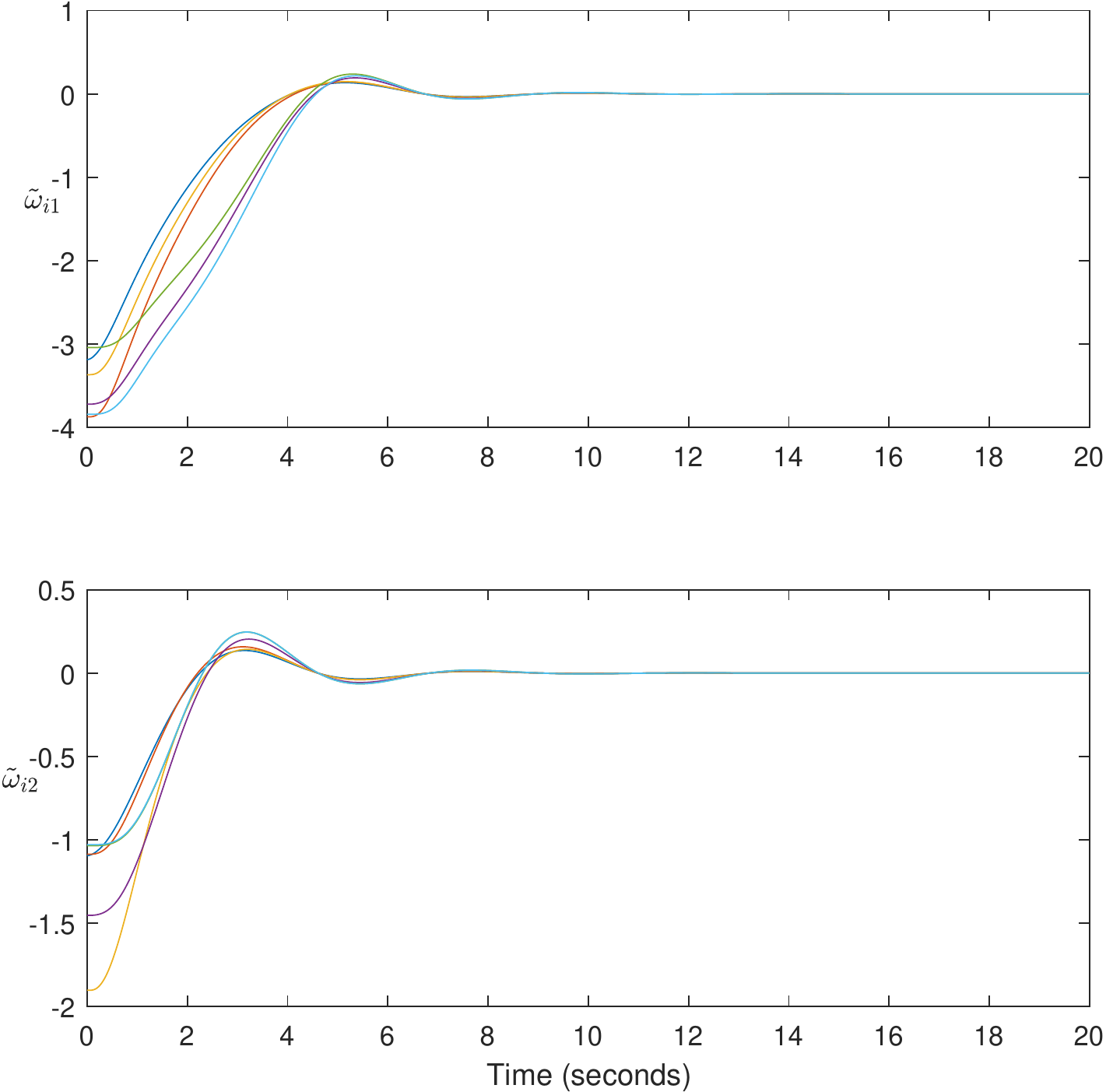,height=3.6in}

  \caption{Trajectory of $\tilde{\omega}_i$,~ $i = 1, \cdots, 6.$}\label{fig6}
\end{figure}
\begin{figure}[ht]
  \centering
  \epsfig{figure=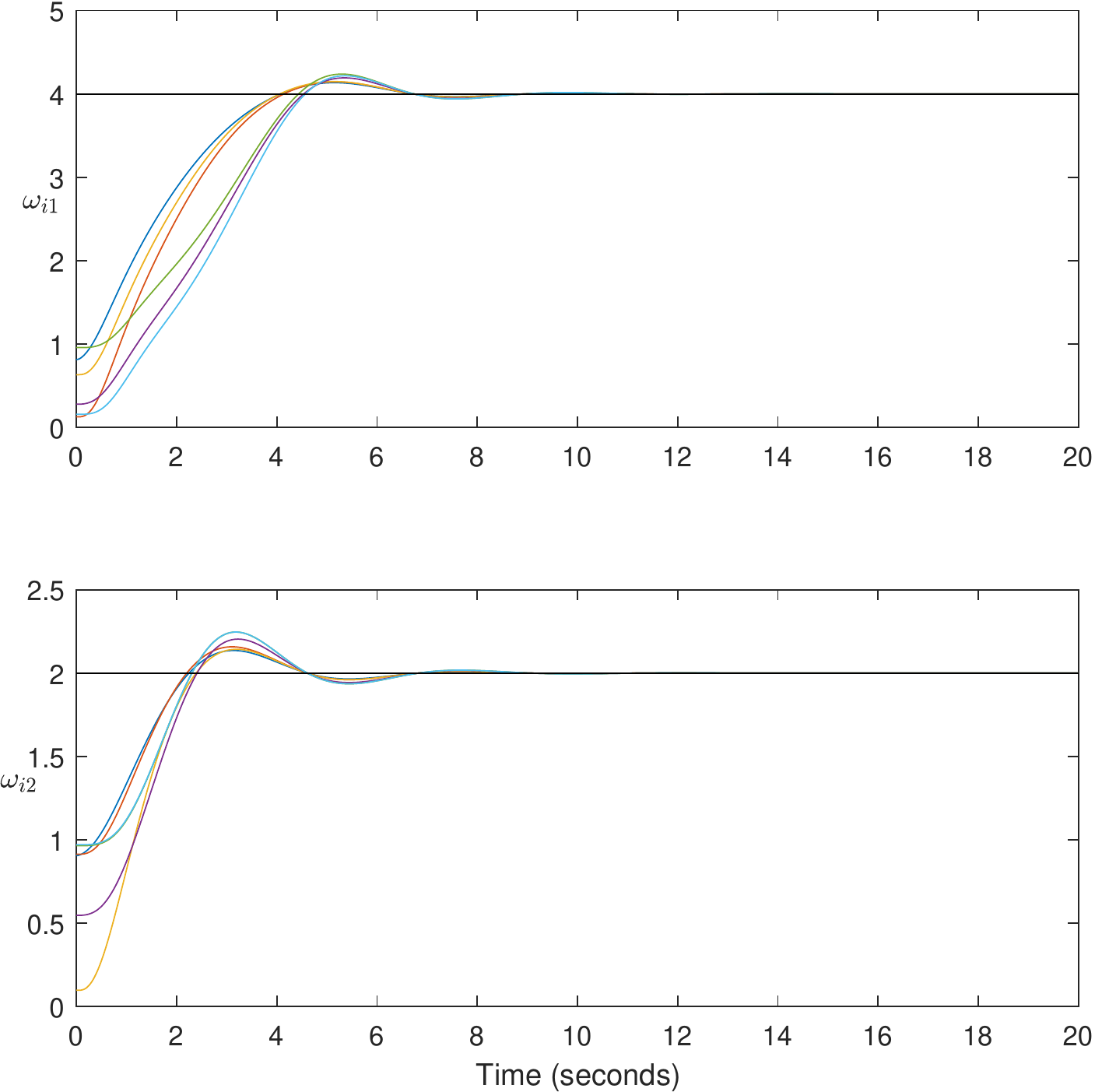,height=3.6in}

  \caption{Trajectory of $\omega_i$,~ $i = 1, \cdots, 6.$}\label{fig5}
\end{figure}

\section{Conclusion}\label{section6}

In this paper, we have studied the leader-following consensus problem of multiple Euler-Lagrange systems subject an uncertain leader system.
For this purpose,  we have first established an adaptive distributed observer for a neutrally  stable linear leader system whose system matrix is not known exactly. Under standard assumptions, this adaptive distributed observer can estimate and pass the leader's state to each follower through the communication network of the system without knowing the leader's system matrix. Further, if  the leader's state is persistently exciting,
this adaptive distributed observer can also asymptotically learn the unknown parameters of the leader's system matrix. On the basis of this adaptive distributed observer, we can synthesize an adaptive distributed control law to solve our problem via the certainty equivalence principle. Our result allows
the leader-following consensus problem of multiple  Euler-Lagrange systems to be solved even if none of our followers knows the exact information of the system matrix of the leader system.

The adaptive distributed observer for the uncertain leader system can also be applied to other problems such as the cooperative output regulation problem as studied in \cite{cai2017adaptive}.
In the future, we will further consider  establishing an adaptive distributed observer for an uncertain leader system subject to a switching topology.

\section{Appendix}
A brief introduction of graph theory is shown in the following which can be found in \cite{godsil2013algebraic}:

A digraph $\mathcal{G}=\left(\mathcal{V},\mathcal{E}\right)$ consists of a finite set of nodes $\mathcal{V}=\{1,2,\cdots, N\}$ and an edge set $\mathcal{E}=\{(i,j), i,j\in \mathcal{V}, i\neq j\}$. Weighted directed graph $\mathcal{G}$ is used to model communication among the $N$ systems. Graph $\mathcal{G}$ consists of a node set $\mathcal{V}=\{1,\cdots,N\}$, and edge set $\mathcal{E}\subseteq \left(\mathcal{V}\times \mathcal{V}\right)$ and a weighted adjacency matrix $\mathcal{C}=\left[c_{ij}\right]\in \mathds{R}^{N\times N}$ with $c_{ij}\geq 0$. If $c_{ij}>0$, then $\left(j,i\right)\in \mathcal{E}$. The in degree of node $i$ is defined as $d_i=\sum_{j\in\mathcal{ N}_i}c_{ij}$. Let $D=diag\{d_1,\cdots,d_N\}$ be the degree matrix of $\mathcal{G}$. The lapalcian matrix of $\mathcal{G}$ is defined as $H=D-\mathcal{C}$. A node $i$ is called a neighbor of a node $j$ if the edge $(i,j)\in \mathcal{E}$. $\mathcal{N}_i$ denotes the subset of $\mathcal{V}$ that consists of all the neighbors of the node $i$. If the graph $\mathcal{G}$ contains a sequence of edges of the form $\left(i_1, i_2\right)$, $\left(i_2, i_3\right)$, $\cdots$, $\left(i_k, i_{k+1}\right)$, then the set $\left\{\left(i_1, i_2\right),\left(i_2, i_3\right), \cdots, \left(i_k, i_{k+1}\right)\right\}$ is called a path of $\mathcal{G}$ from $i_1$ to $i_{k+1}$, and node $i_{k+1}$ is said to be reachable from node $i_1$.
%

\begin{thebibliography}{10}

\providecommand{\url}[1]{#1}
\csname url@samestyle\endcsname
\providecommand{\newblock}{\relax}
\providecommand{\bibinfo}[2]{#2}
\providecommand{\BIBentrySTDinterwordspacing}{\spaceskip=0pt\relax}
\providecommand{\BIBentryALTinterwordstretchfactor}{4}
\providecommand{\BIBentryALTinterwordspacing}{\spaceskip=\fontdimen2\font plus
\BIBentryALTinterwordstretchfactor\fontdimen3\font minus
  \fontdimen4\font\relax}
\providecommand{\BIBforeignlanguage}[2]{{%
\expandafter\ifx\csname l@#1\endcsname\relax
\typeout{** WARNING: IEEEtran.bst: No hyphenation pattern has been}%
\typeout{** loaded for the language `#1'. Using the pattern for}%
\typeout{** the default language instead.}%
\else
\language=\csname l@#1\endcsname
\fi
#2}}
\providecommand{\BIBdecl}{\relax}
\BIBdecl

\bibitem{boyd}
S. Boyd and S. Sastry, ``On parameter convergence in adaptive control,''
\emph{Systems and Control Letters}, vol. 3, pp. 311-319, 1983.

\bibitem{cai2014leader}
H.~Cai and J.~Huang, ``Leader-following consensus of multiple uncertain
  Euler-Lagrange systems under switching network topology,''
  \emph{International Journal of General Systems}, vol.~43, no. 3-4, pp.
  294--304, Jan, 2014.
\bibitem{CaiHuang2016}
H. Cai and J. Huang, “Leader-following attitude consensus of multiple rigid body systems by attitude feedback control,”  \emph{Automatica}, vol. 69, pp. 87-92, July, 2016.

\bibitem{cai2016leader}
H.~Cai and J.~Huang, ``The leader-following consensus for multiple uncertain
  Euler-Lagrange systems with an adaptive distributed observer,'' \emph{IEEE
  Transactions on Automatic Control}, vol.~61, no.~10, pp. 3152--3157, Oct, 2016.

\bibitem{cai2017adaptive}
H.~Cai, F.~L. Lewis, G.~Hu, and J.~Huang, ``The adaptive distributed observer
  approach to the cooperative output regulation of linear multi-agent
  systems,'' \emph{Automatica}, vol.~75, pp. 299--305, Nov, 2017.

\bibitem{chen2015stabilization}
Z.~Chen and J.~Huang, \emph{Stabilization and Regulation of Nonlinear
  Systems}.\hskip 1em plus 0.5em minus 0.4em\relax Springer, 2015.


\bibitem{chung2009cooperative}
S.-J. Chung and J.-J.~E. Slotine, ``Cooperative robot control and concurrent
  synchronization of Lagrangian systems,'' \emph{IEEE Transactions on
  Robotics}, vol.~25, no.~3, pp. 686--700, Jun, 2009.





\bibitem{godsil2013algebraic}C.~Godsil and G.~F. Royle, \emph{Algebraic Graph Theory}.\hskip 1em plus 0.5em
  minus 0.4em\relax Springer Science \& Business Media, 2013, vol. 207.

\bibitem{Hu1}
J. Hu and Y. Hong, ``Leader-following coordination of multi-agent
systems with coupling time delays," \newblock {\em Physica A: Statistical Mechanics
and its Applications}, vol. 374, no. 2, pp. 853-863, 2007.


\bibitem{jad2003}
A. Jadbabaie, J. Lin, and A. S. Morse, ``Coordination of groups of
mobile agents using nearest neighbor rules,'' \textit{IEEE
  Transactions on Automatic Control}, vol 48, no. 6, pp. 988-1001, Jun. 2003.


\bibitem{lewis1993control}
F.~L. Lewis, C.~T. Abdallah, and D.~M. Dawson, \emph{Control of Robot
  Manipulators}.\hskip 1em plus 0.5em minus 0.4em\relax Macmillan New York,
  1993, vol. 236.

\bibitem{ZhangHW2014}
F. L. Lewis, H. Zhang, K. Hengster-Movric and A. Das,  \newblock {\em Cooperative Control of Multi-Agent Systems: Optimal and Adaptive Design Approaches}. London, UK: Springer-Verlag, 2014.


\bibitem{liu2017cooperative}
W.~Liu and J.~Huang, ``Cooperative adaptive output regulation for second-order
  nonlinear multiagent systems with jointly connected switching networks,''
  \emph{IEEE Transactions on Neural Networks and Learning Systems}, Vol. 29, No. 3, pp. 695-705, March 2018.


\bibitem{LiuX1}
X. Liu, J. Lam, W. Yu and G. Chen, ``Finite-time consensus of multiagent systems with a switching protocol," \newblock {\em IEEE Transactions on Neural Networks and Learning Systems},
 vol. 27, no. 4, pp. 853-862, May, 2016.


 \bibitem{Min2011}
H. Min, F. Sun, S. Wang, and H.  Li, ``Distributed adaptive consensus algorithm for networked Euler-Lagrange systems,'' {\em{IET Control Theory and Applications}}, Vol.~5, Issue.~1, 145-154. Jan, 2011.


\bibitem{RN316}
	E.~Nuno, R.~Ortega, L.~Basanez, and D.~Hill, ``Synchronization of networks of
	nonidentical Euler-Lagrange systems with uncertain parameters and
	communication delays,'' {\em IEEE Transactions on Automatic Control},
	vol.~56, no.~4, pp.~935--941, Jan, 2011.

\bibitem{OlfatiFax07}
R. Olfati-Saber, J. A. Fax, and R. M. Murray,
\newblock Consensus and cooperation in networked multi-agent systems.
\newblock \emph{Proceedings of the IEEE}, 95(1), 215-233, 2007.

\bibitem{QuZ2009}
Z. Qu, \newblock {\em Cooperative Control of Dynamical Systems: Applications to Autonomous Vehicles}. London, UK: Springer-Verlag, 2009.


\bibitem{ren2004decentralized}
W.~Ren and R.~W. Beard, ``Decentralized scheme for spacecraft formation flying
  via the virtual structure approach,'' \emph{Journal of Guidance, Control, and
  Dynamics}, vol.~27, no.~1, pp. 73--82, 2004.


\bibitem{RenW2008}
W. Ren and R. Beard, \newblock {\em Distributed Consensus in Multi-vehicle Cooperative Control}. London, UK: Springer-Verlag, 2008.

\bibitem{ren2009distributed}
W.~Ren, ``Distributed leaderless consensus algorithms for networked
  Euler-Lagrange systems,'' \emph{International Journal of Control}, vol.~82,
  no.~11, pp. 2137--2149, Sep, 2009.

\bibitem{r18}
J.J. Slotine and W. Li, {\em Applied Nonlinear Control}, New Jersey: Prentice
Hall, 1991.


\bibitem{r19}
M. W. Spong and M. Vidyasagar, {\em Robot Dynamics and Control}, New
York: John Wiley and Sons, 1989.


\bibitem{su2012cooperative}
Y.~Su and J.~Huang, ''Cooperative output regulation of linear multi-agent
  systems,'' \emph{IEEE Transactions on Automatic Control}, vol.~57, no.~4, pp.
  1062--1066, Sep, 2012.


\bibitem{su2012cooperativesw}
Y.~Su and J.~Huang, ``Cooperative output regulation with application to
  multi-agent consensus under switching network,'' \emph{IEEE Transactions on
  Systems, Man, and Cybernetics, Part B (Cybernetics)}, vol.~42, no.~3, pp.
  864--875, Jan, 2012.

\bibitem{Tabuada1}
P. Tabuada, ``Event-triggered real-time scheduling of stabilizing control tasks," \newblock {\em IEEE Transactions on Automatic Control}, vol. 52, no. 9, pp. 1680-1685, Sep, 2007.


\bibitem{[6]}Y.~Wang, Y.~Song, and W.~Ren, ``Distributed adaptive finite-time approach for
  formation-containment control of networked nonlinear systems under directed
  topology,'' \emph{IEEE Transactions on Neural Networks and Learning systems},
  DOI: 10.1109/TNNLS.2017.2714187, July, 2017.

\bibitem{wu2017adaptive}
Y.~Wu, R.~Lu, P.~Shi, H.~Su, and Z.-G. Wu, ``Adaptive output synchronization of
  heterogeneous network with an uncertain leader,'' \emph{Automatica}, vol.~76,
  pp. 183--192, Feb, 2017.





\bibitem{Yi2017}
X. Yi, W. Lu, and T. Chen,  ``Pull-based distributed event-triggered consensus for multiagent systems with directed topologies," \newblock {\em IEEE Transactions on Neural Networks and Learning Systems}, vol. 28, no. 1, pp. 71-79, Dec, 2017.


\bibitem{RN138}
	Q.~Yang, H.~Fang, J.~Chen, Z.~P. Jiang, and M.~Cao, ``Distributed global
	output-feedback control for a class of Euler-Lagrange systems,'' {\em IEEE
		Transactions on Automatic Control}, vol.~62, no.~9, pp.~ 4855 - 4861, Sept, 2017.	


 \bibitem{[4]}Y.~Yang, H.~Modares, D.~C. Wunsch, and Y.~Yin, ``Leader-follower output
  synchronization of linear heterogeneous systems with active leader using
  reinforcement learning,'' \emph{IEEE Transactions on Neural Networks and
  Learning Systems}, DOI: 10.1109/TNNLS.2018.2803059, March, 2018.

\bibitem{lzzcrm2016}L.~Zhu, Z.~Chen, R.~Middleton, ``A general framework for robust output synchronization of heterogeneous nonlinear networked systems,'' \emph{IEEE Transactions on Automatic Control}, vol.61, no.8, pp.~2092--2107, 2016.



\bibitem{kzystmlh2017}K.~Zhao, Y.~Song,  T.~Ma,  L.~He, ``Prescribed performance control of uncertain Euler-Lagrange systems subject to full-state constraints,'' \emph{IEEE Transactions on Neural Networks and Learning Systems}, vol. PP, no. 99, pp. 1-12, Aug. 2017.



\bibitem{ZhuB2017}
B. Zhu, L. Xie, D. Han, X. Meng, and R. Teo, "A survey on recent progress in control of swarm systems," \newblock {\em Science China Information Sciences}, vol. 60, 070201:1--070201:24, 2017.


\end{thebibliography}

\end{document}